\DeclareMathOperator*{\argmax}{arg\,max}
\newcommand\blfootnote[1]{%
  \begingroup
  \renewcommand\thefootnote{}\footnote{#1}%
  \addtocounter{footnote}{-1}%
  \endgroup
}
\newcommand{\nosemic}{\renewcommand{\@endalgocfline}{\relax}}
\newcommand{\dosemic}{\renewcommand{\@endalgocfline}{\algocf@endline}}
\let\oldnl\nl
\newcommand{\nonl}{\renewcommand{\nl}{\let\nl\oldnl}}
\newtheorem{thm}{Theorem}
\begin{document}
\bstctlcite{IEEEexample:BSTcontrol}

\title{Affine Frequency Division Multiplexing for Next Generation Wireless Communications}

\author{Ali Bemani,~\IEEEmembership{Student Member,~IEEE,}
        Nassar Ksairi,~\IEEEmembership{Senior Member, IEEE,}\\
        and~Marios Kountouris,~\IEEEmembership{Senior Member,~IEEE.}
\thanks{A. Bemani and M. Kountouris are with the Communication Systems department, EURECOM, Sophia-Antipolis, France. Emails: ali.bemani@eurecom.fr, marios.kountouris@eurecom.fr. N. Ksairi is with the Mathematical and Algorithmic Sciences Lab, Huawei France R\&D, Paris, France. Email: nassar.ksairi@huawei.com}}

\maketitle

\vspace{-10mm}
\begin{abstract}
Affine Frequency Division Multiplexing (AFDM), a new chirp-based multicarrier waveform for high mobility communications, is introduced here. AFDM is based on discrete affine Fourier transform (DAFT), a generalization of discrete Fourier transform, which is characterized by two parameters that can be adapted to better cope with doubly dispersive channels. First, we derive the explicit input-output relation in the DAFT domain showing the effect of AFDM parameters in the input-output relation. Second, we show how the DAFT parameters underlying AFDM have to be set so that the resulting DAFT domain impulse response conveys a full delay-Doppler representation of the channel. Then, we show analytically that AFDM can achieve full diversity in doubly dispersive channels, where full diversity refers to the number of multipath components separable in either the delay or the Doppler domain, due to its full delay-Doppler representation. Furthermore, we present a low complexity detection method taking advantage of zero-padding. We also propose an embedded pilot-aided channel estimation scheme for AFDM, in which both channel estimation and data detection are performed within the same AFDM frame. Finally, simulations corroborate the validity of our analytical results and show the significant performance gains of AFDM over state-of-the-art multicarrier schemes in high mobility scenarios.
\end{abstract}
\begin{IEEEkeywords}
Affine Frequency Division Multiplexing, affine Fourier transform, chirp modulation, linear time-varying channels, doubly dispersive channels, high mobility communications.
\end{IEEEkeywords}

\IEEEpeerreviewmaketitle
\section{Introduction}
\blfootnote{
Part of the results appear in \cite{bemani2021afdm_ICC} and \cite{bemani2022low}.}
Next generation wireless systems and standards (beyond 5G/6G) are expected to support a wide spectrum of services, including reliable communication in high mobility scenarios (e.g., V2X communications, flying vehicles, and high-speed rail systems) and in extremely high-frequency (EHF) bands. Current systems are based on Orthogonal Frequency Division Multiplexing (OFDM), a widely used multicarrier scheme that achieves near optimal performance in time-invariant frequency selective channels. Nevertheless, in time-varying channels (also referred to as doubly dispersive or doubly selective channels), the performance of OFDM drastically decreases. This is mainly due to large Doppler frequency shifts and the loss of orthogonality among subcarriers, resulting in inter-carrier interference (ICI). This calls for new modulation techniques and waveforms, which are able to cope with various challenging requirements and show robustness in high mobility scenarios. 

One approach to compensate for fast variations in LTV channels is to shorten the OFDM symbol duration so that the channel variations over each symbol duration become negligible \cite{wang2006performance}. However, due to cyclic prefix (CP), this approach significantly reduces the spectral efficiency. In theory, the optimal approach to cope with time-varying multipath channels is to transmit information symbols leveraging an orthogonal eigenfunction decomposition of the channel and then project the received signal over the same set of orthogonal eigenfunctions at the receiver. In time-invariant (LTI) systems, complex exponentials are known to be eigenfunctions of the channel and can be obtained via the Fourier transform (FT). However, finding an orthonormal basis for general LTV channels is not trivial and polynomial phase models that generalize complex exponentials are often used as alternative bases. 

Since this optimal approach presents significant challenges both in terms of conceptual and computational complexity, using chirps, i.e., complex exponentials with linearly varying instantaneous frequencies, appears to be a promising alternative. The use of chirps for communication and sensing purposes has a long history. S. Darlington in 1947 proposed the chirp technique for pulsed radar systems with long-range performance and high-range resolution \cite{darlington1947}. 
The term “chirp” was apparently first employed by B. M Oliver in an internal Bell Laboratories Memorandum “Not with a bang, but a Chirp”.
In \cite{gott1971hf}, an experimental communication system employing chirp modulation in the high frequency band for air-ground communication is presented. Since chirped waveforms are of spread-spectrum, they can also provide security and robustness in several scenarios, including military, underwater and aerospace communications \cite{kadri2009low,he2010underwater,palmese2008experimental}. Chirps are specified in the IEEE 802.15.4a standard as chirp spreads spectrum (CSS)\pagebreak to meet the requirement of FCC on the radiation power spectral mask for the unlicensed UWB systems\cite{4299496}.

Using a frequency-varying basis for a multicarrier transmission scheme over time-varying channels is first introduced in \cite{martone2001multicarrier}. In this work,  an orthonormal basis formed by chirps are generated using Fractional Fourier Transform (FrFT). The scheme is presented in a continuous-time setting, whereas the approximation used for making the continuous-time FrFT discrete leads to imperfect orthogonality among chirp subcarriers and hence to performance degradation. A multicarrier technique based on Affine Fourier Transform (AFT), which is a generalization of the Fourier and  fractional Fourier transform, is proposed in \cite{erseghe2005multicarrier}. The resulting multicarrier waveform therein is referred to as DAFT-OFDM in the sequel where DAFT stands for Discrete AFT. It is equivalent to OFDM with reduced ICI on doubly dispersive channels and is shown to achieve low diversity order. Moreover, the delays and the Doppler shifts of channel paths are required at the transmitter in order to tune the DAFT-OFDM parameters. In \cite{stojanovic2010multicarrier}, a general interference analysis of the DAFT-OFDM system is provided and the optimal parameters are obtained in closed form, followed by the analysis of the effects of synchronization errors and the optimal symbol period. Another scheme that is proposed for communication over time-dispersive channels is Orthogonal Chirp Division Multiplexing (OCDM) \cite{ouyang2016orthogonal}, which is based on the discrete Fresnel transform - a special case of DAFT. OCDM is shown to perform better than uncoded OFDM in LTI and LTV channels \cite{omar2021performance}.
However, in LTI channels, OCDM can achieve unit diversity for very large signal to noise ratio (SNR), whereas in general LTV channels, it cannot achieve full diversity since its diversity order depends on the delay-Doppler profile of the channel.

In addition to chirp-based modulation, several waveforms have been proposed to provide improved performance compared to OFDM in terms of carrier frequency offset (CFO) sensitivity, peak to average power ratio (PAPR), and out-of-band emissions (OOBE). Discrete Fourier transform spread OFDM (DFT-s-OFDM), also known as Single Carrier-Frequency Division Multiple Access (SC-FDMA), has been proposed in \cite{myung2006single}, which spreads symbol energy equally over all subcarriers to reduce PAPR by precoding data symbols using a DFT. Generalized Frequency Division Multiplexing (GFDM) \cite{fettweis2009gfdm, michailow2014generalized} is a multicarrier modulation based on a circular pulse shaping filter, aiming to reduce the OOBE. Nevertheless, DFT-s-OFDM and GFDM are sensitive to CFO due to Doppler spread. \pagebreak To deal with the Doppler spread, Orthogonal Time Frequency Space (OTFS) modulation has recently been proposed for high mobility communications \cite{hadani2017orthogonal, hadani2018otfs}. OTFS is a two-dimensional (2D) modulation technique that spreads the information symbols over the delay-Doppler domain. OTFS has been shown to outperform previously proposed waveforms in both frequency selective and doubly selective channels \cite{anwar2020performance}. Therefore, in this paper, the performance of the proposed AFDM is compared with OTFS. OTFS achieves the full so-called \emph{effective} diversity order \cite{raviteja2019effective}, i.e., the diversity order in the finite SNR regime. However, \cite{surabhi2019diversity} shows that the OTFS diversity order without channel coding is one and with a phase rotation scheme using transcendental numbers can be made to achieve full diversity. The idea of embedding pilots along with the data symbols in the delay-Doppler domain has been proposed in \cite{raviteja2019embedded}. Although no separate transmission for the pilot symbols is needed, OTFS suffers from excessive pilot overhead due to its 2D structure as each pilot symbol should be separated from the data symbols. 

In this paper, we propose a novel multicarrier scheme called Affine Frequency Division Multiplexing (AFDM), which is a DAFT-based waveform using multiple orthogonal information-bearing chirp signals. 
The key idea is to multiplex information symbols in the DAFT domain in such a way that all the paths are separated from each other and each symbol experiences all paths coefficient. This separability is a unique feature of our scheme and cannot be achieved by other DAFT-based schemes. DAFT plays a fundamental role in AFDM, similarly to Fourier transform in OFDM. This work aims at establishing that AFDM is a promising new waveform for high mobility environments, having as well potential for communication at high frequency bands \cite{bemani2021affine}.
The contributions of this paper are summarized as follows:
\begin{itemize}
\item Introducing the Affine Fourier transform, we show how its discrete version can be achieved. Then, for the proposed AFDM, we analyze the DAFT domain input-output relation under doubly dispersive channels. 
The input-output relation is instrumental in giving insight on how DAFT parameters need to be tuned to avoid that time-domain channel paths with distinct delays or Doppler shifts overlap in the DAFT domain. 
\item We derive the diversity order of AFDM under maximum likelihood (ML) detection and we analytically show that AFDM achieves the full diversity of the LTV channels. 
\item We propose a low complexity detection algorithm for AFDM taking advantage of its inherent channel sparsity. For that, \pagebreak the channel matrix is approximated as a band matrix placing some null symbols - zero-padding the AFDM frame - in the DAFT domain. We propose a low complexity iterative decision feedback equalizer (DFE) based on weighted maximal ratio combining (MRC) of the channel impaired input symbols received from different paths. The overall complexity of this algorithm is linear both in the number of subcarriers and in the number of paths. We also show that this detector has similar performance as LMMSE detector with much less complexity.
\item For the embedded channel estimation, we arrange one pilot symbol and data symbols in one AFDM frame considering zero-padded symbols as guard intervals separating the pilot symbol and data symbols to avoid interference between them. We propose efficient approximated ML algorithms for channel estimation for the integer and fractional Doppler shifts. The proposed channel estimation schemes result in marginal performance degradation compared to AFDM with perfect channel knowledge.
\end{itemize}

This paper is organized as follows. In Section \ref{section_AFT}, AFT and DAFT are introduced and the proposed AFDM is presented in Section \ref{section_AFDM}. Diversity analysis of AFDM in LTV channels is provided in Section \ref{sec_diversity}. The proposed low complexity detection and channel estimation methods are presented in section \ref{sec_detection} and section \ref{section_channelEst}, respectively. In Section \ref{sec_sim_result}, simulation results for the AFDM performance are provided, and Section \ref{sec_conc} concludes this paper.

\section{Affine Fourier Transform}
\label{section_AFT}
In this section, we introduce the AFT and the DAFT, which form the basis of AFDM.
\subsection{Continuous Affine Frequency Transform}
Affine Fourier Transform, also known as Linear Canonical Transform \cite{healy2015linear}, is a four-parameter $(a,b,c,d)$ class of linear integral transform defined as
\begin{equation}
    S_{a,b,c,d}(u) = \begin{cases} \int_{-\infty}^{+\infty}s(t)K_{a,b,c,d}(t,u){\rm d}t,& b\neq 0 \\s(df){e^{-\imath{cd\over 2}u^{2}}\over \sqrt{a}}, \qquad \qquad \qquad \quad \,& b=0\end{cases}\label{AFT}
\end{equation}
where $(a,b,c,d)$ forms $\textrm{M} = \left[{\begin{array}{cc}
a & b \\ c & d
\end{array}}\right]$ with unit determinant, i.e, $ad - bc = 1$ and transform kernel given by
\begin{equation}
    K_{a,b,c,d}(t, u) = \frac{1}{\sqrt{2\pi\vert b\vert }}{{\rm e}^{-\imath({a\over 2b}u^{2}+{1\over b}ut+{d\over 2b}t^{2})}}.
\end{equation}
The inverse transform can be expressed as an AFT having the parameters $\textrm{M}^{-1} = \left[{\begin{array}{cc}
a & -b \\ -c & d
\end{array}}\right]$ 
\begin{equation}
    s(t) = \int_{-\infty}^{+\infty}S_{a,b,c,d}(u)K_{a,b,c,d}^*(t,u){\rm d}u.
\end{equation}
The AFT generalizes several known mathematical transforms, such as Fourier transform (0,1/2$\pi$,-2$\pi$,0), Laplace transform (0,$j(1/2\pi)$,$j2\pi$,0), $\theta$-order fractional Fourier transform  ($\cos$$\theta$, (1/2$\pi$)$\sin$$\theta$,-2$\pi\sin\theta$,$\cos\theta$), Fresnel transform and the scaling operations. The extra degree of freedom of AFT provides flexibility and has been employed in many applications, including filter design, time-frequency analysis, phase retrievals, and multiplexing in communication. The effect of AFT can be interpreted by the Wigner distribution function (WDF). After doing the AFT, the WDF of $S_{a,b,c,d}(u)$ will be the twisting of the WDF of $s(t)$.

\subsection{Discrete Affine Frequency Transform}
The discrete transform can generally be used either to compute the continuous transform for spectral analysis or to process discrete data signals. Sampling the continuous function provides the input of the discrete transform in the former case, while a pure discrete sequence is considered for the input in the latter case. 
Therefore, discrete AFT is obtained in two types \cite{pei2000closed}, which are essentially identical with different parameterizations. To derive the DAFT, input function $s(t)$ and $S_{a,b,c,d}(u)$ are sampled by the interval $\Delta t$ and $\Delta u$ as 
\begin{equation}
    s_n = s(n\Delta t), S_m = S_{a,b,c,d}(m\Delta u),
    \label{eq:y_Y_discrete}
\end{equation}
where $n = 0, ..., N-1$ and $m = 0, ..., M-1$. From \eqref{eq:y_Y_discrete}, we can convert \eqref{AFT} as
\begin{equation}
    S_m = \frac{1}{\sqrt{2\pi\vert b\vert }}\cdot \Delta t \cdot {{\rm e}^{-\imath({a\over 2b}m^2\Delta u^{2})}\sum_{n = 0}^{N-1}e^{-\imath({1\over b}mn\Delta u \Delta t+{d\over 2b}n^2\Delta t^{2})}}s_n.
\end{equation}
This equation can be written in the form of transformation matrix 
\begin{equation}
    S_m = \sum_{n = 0}^{N-1}F_{a,b,c,d}(m, n)s_n,\label{eq:tranMat}
\end{equation}
where $F_{a,b,c,d}(m, n) = \frac{1}{\sqrt{2\pi\vert b\vert }} \cdot\Delta t \cdot{{\rm e}^{-\imath({a\over 2b}m^2\Delta u^{2}+{1\over b}mn\Delta u \Delta t+{d\over 2b}n^2\Delta t^{2})}}$.

In order for \eqref{eq:tranMat} to be reversible, the following condition should hold \cite{pei2000closed}
\begin{equation}
    \Delta t \Delta u = \frac{2\pi|b|}{M}.\label{eq:condrivers}
\end{equation}
Thus, the DAFT of the first type can be written as follows:
\begin{align}
    &S_m = \frac{1}{\sqrt{M}}{{\rm e}^{-\imath{a\over 2b}m^2\Delta u^{2}}\sum_{n = 0}^{N-1}e^{-\imath({2\pi\over M}mn+{d\over 2b}n^2\Delta t^{2})}}s_n, \quad b >0\\
    &S_m = \frac{1}{\sqrt{M}}{{\rm e}^{-\imath{a\over 2b}m^2\Delta u^{2}}\sum_{n = 0}^{N-1}e^{-\imath(-{2\pi\over M}mn+{d\over 2b}n^2\Delta t^{2})}}s_n, \quad b <0.
\end{align}
The DAFT of the second type \cite{pei2000closed} can be obtained by defining $c_1 = \frac{d}{4\pi b}\Delta t^2$ and $c_2 = \frac{a}{4\pi b}\Delta u^2$ so that $S_m$ in \eqref{eq:tranMat} writes as $S_m=\sum_{n=0}^{N-1}F_{c_1,c_2}(m,n)s_n$, where
\begin{equation}
    F_{c_1,c_2}(n, m) \triangleq \frac{1}{\sqrt{M}}{{\rm e}^{-\imath2\pi( c_2m^2 +{sgn(b)\over M}mn+c_1n^2)}}.
\end{equation}
The condition in \eqref{eq:condrivers} then becomes $c_1c_2 = \frac{ad}{4M^2}$.
Since $a$ and $d$ can take any real value as long as $b$ and $c$ are adjusted to satisfy $ad-bc=1$, there is no constraint for $c_1$ and $c_2$ and can take any real values. Further simplification follows from fixing $sgn(b)=1$, i.e., the DAFT is defined as
\begin{equation}
    S_m = \frac{1}{\sqrt{M}}{{\rm e}^{-\imath2\pi c_2m^2}\sum_{n = 0}^{N-1}e^{-\imath2\pi({1\over M}mn+c_1n^2)}}s_n,\label{eq:Y_m_DAFT}
\end{equation}
where $M\geq N$ and its inverse transform is the following
\begin{equation}
    s_n = \frac{1}{\sqrt{M}}{{\rm e}^{\imath2\pi c_1n^2}\sum_{m = 0}^{M-1}e^{\imath2\pi({1\over M}mn+c_2m^2)}}S_m.\label{eq:y_n_DAFT}
\end{equation}

Moreover, we should take into account that sampling in one domain imposes periodicity in another domain. Considering \eqref{eq:Y_m_DAFT} and \eqref{eq:y_n_DAFT}, the following periodicity can be seen
\begin{align}
    &S_{m+kM} =e^{-\imath2\pi c_2(k^2M^2 + 2kMm)}S_m,\\
    &s_{n+kN} =e^{\imath2\pi c_1(k^2N^2 + 2kNn)}s_n.\label{eq:perTime}
\end{align}
For our purposes, only constraint \eqref{eq:perTime} matters, whose sole practical effect is on the kind of prefix one should add to a DAFT-based multicarrier symbol. When $M = N$, as considered in this paper, the inverse transform is the same as the forward transform with parameters $-c_1$ and $-c_2$ and conjugating the Fourier transform term. In matrix representation, arranging samples $s_n$ and $S_m$ in the period $[0, N)$ in vectors
\begin{equation}
\begin{aligned}
    \mathbf{s} &= (s_0, s_1, ..., s_{N-1}) \ \quad \ \text{and} \ \quad \ \mathbf{S} &= (S_0, S_1, ..., S_{N-1}),
    \label{eq:chirp_periodicity}
\end{aligned}
\end{equation}
DAFT is expressed as $\mathbf{S} = \mathbf{As}$ with $\mathbf{A} = \mathbf {\Lambda}_{c_2}{\mathbf F}{\mathbf \Lambda}_{c_1}$, $\mathbf{F}$ being the DFT matrix with entries $e^{-\imath2\pi mn/N}/\sqrt{N}$ and 
\begin{equation}
    \mathbf{\Lambda}_{c}= {\rm diag} (e^{-\imath2\pi cn^{2}}, n=0, 1, \, \ldots\,, N-1).
\end{equation}
The inverse of the matrix $\mathbf{A}$ is given by $\mathbf{A}^{-1} = \mathbf{A}^H =  {\mathbf \Lambda}_{c_1}^H{\mathbf F}^H{\mathbf \Lambda}_{c_2}^H$.
We can now show that $F_{c_1,c_2}(n, m)$ with $sgn(b) = 1$ and $M = N$ forms an orthonormal basis of $\mathbb{C}^N$, i.e,
\begin{equation}
\begin{aligned}
    & \sum_{n = 0}^{N-1}F_{c_1,c_2}(n, m_1)F^*_{c_1,c_2}(n, m_2) = \frac{1}{N}e^{-\imath2\pi c_2(m_1^2-m_2^2)}\sum_{n = 0}^{N-1}e^{-\imath\frac{2\pi}{N}(m_1 - m_2)n} = \delta(m_1 - m_2).
\end{aligned}
\end{equation}

\vspace{-2mm}
\section{Affine Frequency Division Multiplexing}\label{section_AFDM}
In this section, we present our proposed DAFT-based multicarrier waveform and transceiver scheme, coined AFDM. In this scheme, inverse DAFT (IDAFT) is used to map data symbols into the time domain, while DAFT is performed at the receiver to obtain the effective discrete affine Fourier domain channel response to the transmitted data, as shown in Fig.~\ref{fig:AFDM_blokcdiagrma}.
\vspace{-5mm}
\subsection{Modulation}
Let $\mathbf{x} \in \mathbb{A}^{N\times 1}$ denote the vector of information symbols in the discrete affine Fourier domain, where $\mathbb{A}$
$\subset \mathbb{Z}[j]$ represents the alphabet and $\mathbb{Z}[j]$ denotes the number field whose elements have the form $z_r + z_ij$, with $z_r$ and $z_i$ integers. QAM symbols are considered in the remainder. The modulated signal can be written as 
\begin{equation}
\label{eq:mod}
    s[n] = \sum_{m = 0}^{N-1}x[m]\phi_n(m), \quad n= 0 , \cdots, N-1,
\end{equation}
where $\phi_n(m) = \frac{1}{\sqrt{N}}\cdot e^{\imath2\pi (c_1n^2 + c_2m^2 + nm/N)}$. In matrix form, \eqref{eq:mod} becomes ${\mathbf s} = \mathbf{A}^H\mathbf{x}={\mathbf \Lambda}_{c_1}^H{\mathbf F}^H{\mathbf \Lambda}_{c_2}^H\mathbf{x}$.

\begin{figure*}
  \centering
  \includegraphics[scale=.6]{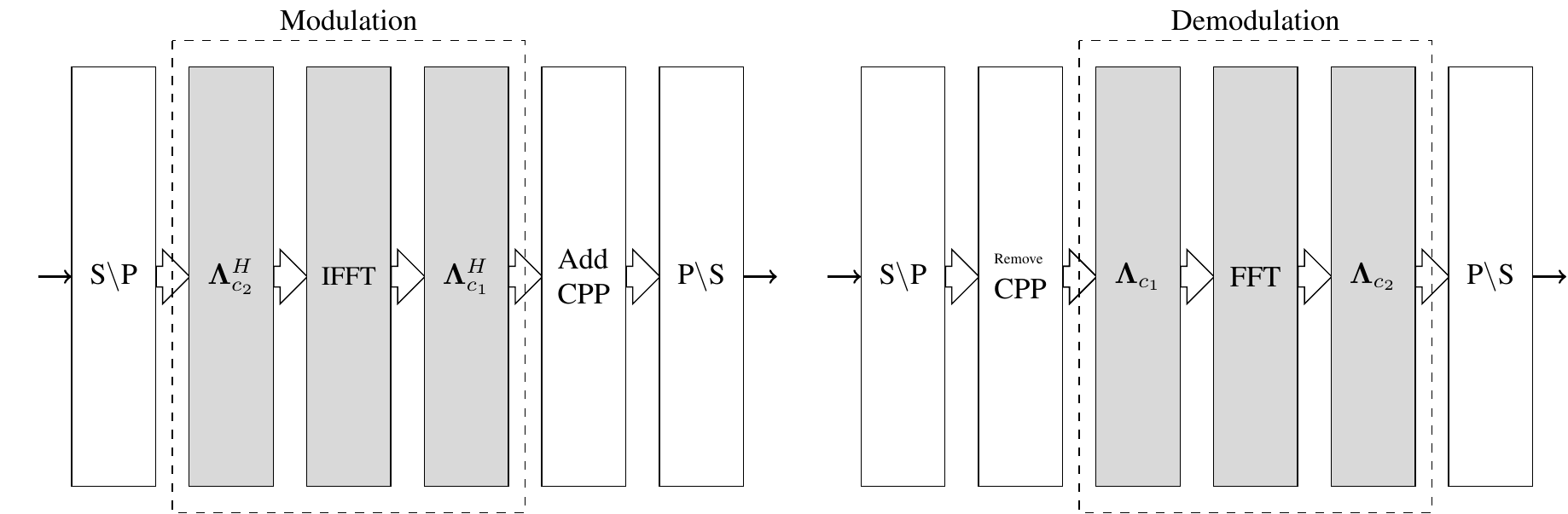}
  \caption{AFDM block diagram}
  \label{fig:AFDM_blokcdiagrma}
  \vspace{-5mm}
\end{figure*}

Similarly to OFDM, the proposed scheme requires a prefix to combat multipath propagation and make the channel seemingly lie in a periodic domain. Due to different signal periodicity, a {\emph{chirp-periodic}} prefix (CPP) is used here instead of an OFDM cyclic prefix (CP). For that, an $L_{\rm{cp}}$-long prefix, occupying the positions of the negative-index time-domain samples, should be transmitted, where $L_{\rm{cp}}$ is any integer greater than or equal to the value in samples of the maximum delay spread of the channel. With the periodicity defined in \eqref{eq:chirp_periodicity}, the prefix is
\begin{equation}
    s[n] = s[N+n]e^{-\imath2\pi c_1(N^2+2Nn)},\quad n = -L_{\rm{cp}}, \cdots, -1.
\end{equation} 
Note that a CPP is simply a CP whenever $2Nc_1$ is an integer value and $N$ is even.
\subsection{Channel}
\label{sec:channel}
After parallel to serial conversion and transmission over the channel, the received samples are 
\begin{equation}
    r[n] = \sum_{l = 0}^{\infty}s[n-l]g_n(l) + w[n],\label{eq:r_n1}
\end{equation}
where $w_n\sim\mathcal{CN}\left(0,N_0\right)$ is an additive Gaussian noise and
\begin{equation}
    g_n(l) = \sum _{i=1}^{P} h_{i}e^{-\imath2\pi f_in}\delta(l - l_i),\label{eq:channel_model}
\end{equation}
is the impulse response of channel at time $n$ and delay $l$, where $P\geq1$ is the number of paths, $\delta(\cdot)$ is the Dirac delta function, and $h_i, f_i$ and $l_i$ are the complex gain, Doppler shift (in digital frequencies), and the integer delay associated with the $i$-th path, respectively. Note that this model is general and also covers the case where each delay tap can have a Doppler frequency \emph{spread} by simply allowing for different paths $i,j\in\{1,\ldots,P\}$ to have the same delay $l_i=l_j$, while satisfying $f_i\neq f_j$.
We define $\nu_i \triangleq{} N f_i = \alpha_i +a_i $, where $\nu_i\in\left[-\nu_{\max},\nu_{\max}\right]$ is the Doppler shift normalized with respect to the subcarrier spacing, $\alpha_i\in\left[-\alpha_{\max},\alpha_{\max}\right]$ is its integer part whereas $a_i$ is  the fractional part satisfying $\frac{-1}{2} < a_i \leq \frac{1}{2}$. We assume that the maximum delay of the channel satisfies $l_{\max}\triangleq\max (l_i) < N$, and that the CPP length is greater than $l_{\max}-1$. 

After discarding the CPP, we can write \eqref{eq:r_n1} in the matrix form
\begin{equation}
    {\mathbf r} = {\mathbf H}\mathbf{s} + \mathbf{w},
\end{equation}
where $\mathbf{w}\sim\mathcal{CN}\left(\mathbf{0},N_0\mathbf{I}\right)$ and ${\mathbf H}= \sum _{i=1}^{P} h_{i} {\mathbf {\Gamma }}_{\mathrm{CPP}_{i}} {\mathbf {\Delta }}_{f_{i}} {\mathbf{\Pi }}^{l_{i}}$,
where $\mathbf{\Pi }$ is the   forward   cyclic-shift   matrix, $
\Delta_{f_i} \triangleq{\mathrm{diag}}(e^{-\imath2\pi f_in}, n = 0, 1, \, \ldots,\, N-1)
$
and ${\mathbf {\Gamma }}_{\mathrm{CPP}_{i}} $ is a $N\times N$ diagonal matrix 
{{
\begin{equation}
    \begin{multlined}
    \mathbf {\Gamma }_{\mathrm{CPP}_{i}} =  {\mathrm{diag}}(\begin{cases}e^{-\imath2\pi c_1(N^{2}-2N(l_i - n))}&n < l_i\\ 1&n\geq l_i\end{cases}, n = 0,\, \ldots,\, N-1).
    \end{multlined}
    \label{eq:CP_matrix}
\end{equation}
}}
From \eqref{eq:CP_matrix} we can see that whenever $2Nc_1$ is an integer and $N$ is even, $\mathbf {\Gamma }_{\mathrm{CPP}_{i}}=\mathbf {I}$.
\vspace{-2mm}
\subsection{Demodulation}
At the receiver side, the DAFT domain output symbols are obtained by
\begin{equation}
    y[m] = \sum_{n = 0}^{N-1}r[n]\phi_n^*(m).
    \label{eq:y_received}
\end{equation}
In matrix representation, the output can be written as
\begin{equation}
\begin{aligned}
    {\mathbf y} =& \mathbf{A}\mathbf{r}
    = \sum _{i=1}^{P} h_{i} \mathbf{A}{\mathbf {\Gamma }}_{\mathrm{CPP}_{i}} {\mathbf {\Delta }}_{f_{i}} {\mathbf{\Pi }}^{l_{i}} \mathbf{A}^H{\mathbf x}+ \mathbf{A}{\mathbf w}
    = {\mathbf H}_{\mathrm{eff}} {\mathbf x}+ \widetilde {\mathbf w},\label{eq_rec}
\end{aligned}
\end{equation}
where  ${\mathbf H}_{\mathrm{eff}} \triangleq \mathbf{A}{\mathbf H}\mathbf{A}^H$ and $ \widetilde {\mathbf w} = \mathbf{A}\mathbf{w}$. Since $\mathbf{A}$ is a unitary matrix, $\widetilde {\mathbf w}$ and ${\mathbf w}$ have the same statistical properties.

\subsection{Input-Output relation }\label{subsec:input_output_rel}
From \eqref{eq_rec}, we see that the received symbols are a linear combination of the transmitted symbols. Moreover, we know that features, such as diversity order, detection complexity, and channel estimation, are determined by the input-output relation, i.e, the structure of the effective channel. For example, the OFDM effective channel is diagonal, exhibiting poor diversity while the detection can be implemented using a 1-tap equalizer. For that, we provide here the structure of ${\mathbf H}_{\mathrm{eff}}$ as input-output relation and show that it has a sparse structure and can be formed by the AFDM parameters.
Considering the definition of ${\mathbf H}_{\mathrm{eff}}$, \eqref{eq_rec} can be rewritten as
\begin{equation}
    \mathbf{y} = \sum_{i = 1}^{P}h_i\mathbf{H}_i\mathbf{x} + \Tilde{\mathbf{w}}, \label{eq:y_output}
\end{equation}
where $\mathbf{H}_i \triangleq {\mathbf A}{\mathbf {\Gamma }}_{\mathrm{CPP}_{i}} {\mathbf {\Delta }}_{f_{i}} {\mathbf{\Pi }}^{l_{i}}{\mathbf A}^H\label{H_i}$.
It can be shown that ${H}_i[p, q]$ is given by
\begin{equation}
\begin{aligned}
    &{H}_i[p, q] = \frac{1}{N}e^{\imath\frac{2\pi}{N}(Nc_1l_i^2 - ql_i + Nc_2(q^2 - p^2))}\mathcal{F}_i(p, q),\label{eq:H_i_general}
\end{aligned}
\end{equation}
 where we denote $\mathcal{F}_i(p, q)$ as
\begin{equation}
\begin{aligned}
\mathcal{F}_i(p, q) &= \sum_{n = 0}^{N-1} e^{-\imath\frac{2\pi}{N}((p-q + \nu_i + 2Nc_1 l_i)n)} = \frac {e^{-\imath {2\pi } (p-q + \nu_i + 2Nc_1 l_i) }\!-\!1}{e^{-\imath \frac {2\pi }{N} (p-q + \nu_i + 2Nc_1 l_i)}\!-\!1}.\label{eq:general_F_eq}
\end{aligned}
\end{equation}
As we can see, the value of $\mathcal{F}_i(p, q)$ depends on the Doppler shift $\nu_i$. Therefore, we have two cases, namely integer Doppler shift and fractional Doppler shift. We first show the input-output relation for the integer case, and we state the relation of the general case afterwards.
\subsubsection{Integer Doppler Shifts}
With $\nu_i$ being integer valued for all $i\in\{1,\ldots,P\}$, i.e., $a_i=0$, \eqref{eq:general_F_eq} is equal to
\begin{equation}
    \mathcal{F}_i(p, q) = \begin{cases}
    N & q = (p + \mathrm{loc}_i)_N\\
    0 & {\text{otherwise}}
    \end{cases}
\end{equation}
where $\mathrm{loc}_i \triangleq (\alpha_i + 2Nc_1 l_i)_N$, $(\cdot)_N$ is the modulo $N$ operation and \eqref{eq:H_i_general} writes as
\begin{equation}
    {H}_i[p, q] = 
    \begin{cases} e^{\imath\frac{2\pi}{N}(Nc_1l_i^2 - ql_i + Nc_2(q^2 - p^2))} & q = (p+\mathrm{loc}_i)_N  \\
    0 & {\text{otherwise}}
    \end{cases}.
    \label{eq:Hi_p_q_integer}
\end{equation}
Hence, there is only one non-zero element in each row of $\mathbf{H}_i$ as shown in Fig.~\ref{fig:integerDopplerOnePath}, and the input-output relation for \eqref{eq:y_output} becomes
\begin{equation}
    y[p] = \sum_{i = 1}^{P}h_ie^{\imath\frac{2\pi}{N}(Nc_1{l_i^2} - ql_i + Nc_2(q^2 - p^2))}x[q] + \Tilde{{w}}[p],  0 \leq p \leq N-1,
    \label{eq:y_output_special}
\end{equation}
where $q = (p + \mathrm{loc}_i)_N$.

\subsubsection{Fractional Doppler Shifts}
Considering the fractional Doppler shifts, it can be shown that for a given $p$, $\mathcal{F}_i(p, q) \neq 0$, for all $q$. However, the magnitude of ${H}_i[p, q]$ has a peak at $q = (p + \mathrm{loc}_i)_N$ and decreases as $q$ moves away from $\mathrm{loc}_i$. To show this, we have
\begin{equation}
\begin{aligned}
    \left|{H}_i[p, q]\right| &= \left| \frac{e^{\imath\frac{2\pi}{N}(Nc_1{l_i^2} - ql_i + Nc_2(q^2 - p^2))}}{N}\mathcal{F}_i(p, q)\right| = \left| \frac{1}{N}\mathcal{F}_i(p, q)\right| = \left|\frac {\sin(N\theta)}{N\sin(\theta)}\right|,
    \end{aligned}
\end{equation}
 where $\theta \triangleq \frac{\pi}{N}(p-q + \mathrm{loc}_i+a_i)$. Using the inequality $ \left|\sin(N\theta) \right|\leq \left|N\sin(\theta)\right|$, which is tight for small values of $\theta$, we can show 
 \begin{equation} 
 \begin{aligned}
 \left |{\frac {\sin (N\theta)}{N\sin (\theta)}}\right |&=\left |{\frac {\sin ((N\!-\!1)\theta)\cos (\theta) + \sin (\theta)\cos ((N\!-\!1)\theta) }{N\sin (\theta)}}\right |\le\frac {N-1}{N}\left |{\cos (\theta)}\right |+\frac {1}{N}.\label{eq:sin_theta}
 \end{aligned}
 \end{equation}

 The right-hand side (r.h.s.) of \eqref{eq:sin_theta} has its peak at the smallest $|\theta_{p,q,i}|$ when $q=(p+loc_i)_N$. As $q$ moves away from $(p+loc_i)_N$, $|\theta_{p,q,i}|$ increases and the r.h.s. of \eqref{eq:sin_theta} decreases. Moreover, the larger the value of $N$, the faster this decrease is with respect to $q$. Therefore, we consider from now on that $|H_i(p,q)|$ is non-zero only for $2k_\nu+1$ values of $q$ corresponding to an interval centered at $ q=(p+loc_i)_N $. Here, $k_\nu$ is chosen as a function of $N$ in such a way that for all $i$ and $p$ the r.h.s. of \eqref{eq:sin_theta} is smaller than a sensitivity threshold for all $ |q-(p+loc_i)_N|>k_\nu$. In formulas, this translates to
 \begin{equation}
    {H}_i[p, q] =
    \begin{cases} \frac{e^{\imath\frac{2\pi}{N}(Nc_1{l_i^2} - ql_i + Nc_2(q^2 - p^2))}}{N}\mathcal{F}_i(p, q) & (p + \mathrm{loc}_i - k_\nu)_N \leq q \leq (p + \mathrm{loc}_i + k_\nu)_N\\
    0 & {\text{otherwise}}
    \end{cases}.
    \label{eq:Hi_p_q_fractional}
\end{equation}
Hence, there are $2k_\nu + 1$  non-zero elements in each row of $\mathbf{H}_i$, as shown in Fig.~\ref{fig:FractionalDopplerOnePath}, and the input-output relation for \eqref{eq:y_output} is written as
{\small
\begin{equation}
    y[p] = \sum_{i = 1}^{P}\frac{1}{N}h_ie^{\imath2\pi c_1{l_i^2}}\sum_{q = (p + \mathrm{loc}_i - k_\nu)_N}^{(p + \mathrm{loc}_i + k_\nu)_N}e^{\imath\frac{2\pi}{N}(-ql_i + Nc_2(q^2 - p^2))}\frac {e^{\imath {2\pi } (p-q + a_i + \mathrm{loc}_i) }\!-\!1}{e^{\imath \frac {2\pi }{N} (p-q +a_i + \mathrm{loc}_i)}\!-\!1}x[q] + \Tilde{{w}}[p],  0 \leq p \leq N-1.
    \label{eq:y_output_special_fractional}
\end{equation}
}
It should be noted that the range for the sum is circulant, i.e, when it is from $N-3$ to $1$, it is counted as $N-3, N-2, N-1, 0, 1$.
\begin{figure}
\centering
\begin{subfigure}{0.3\textwidth}
\centering
\includegraphics[width= \textwidth]{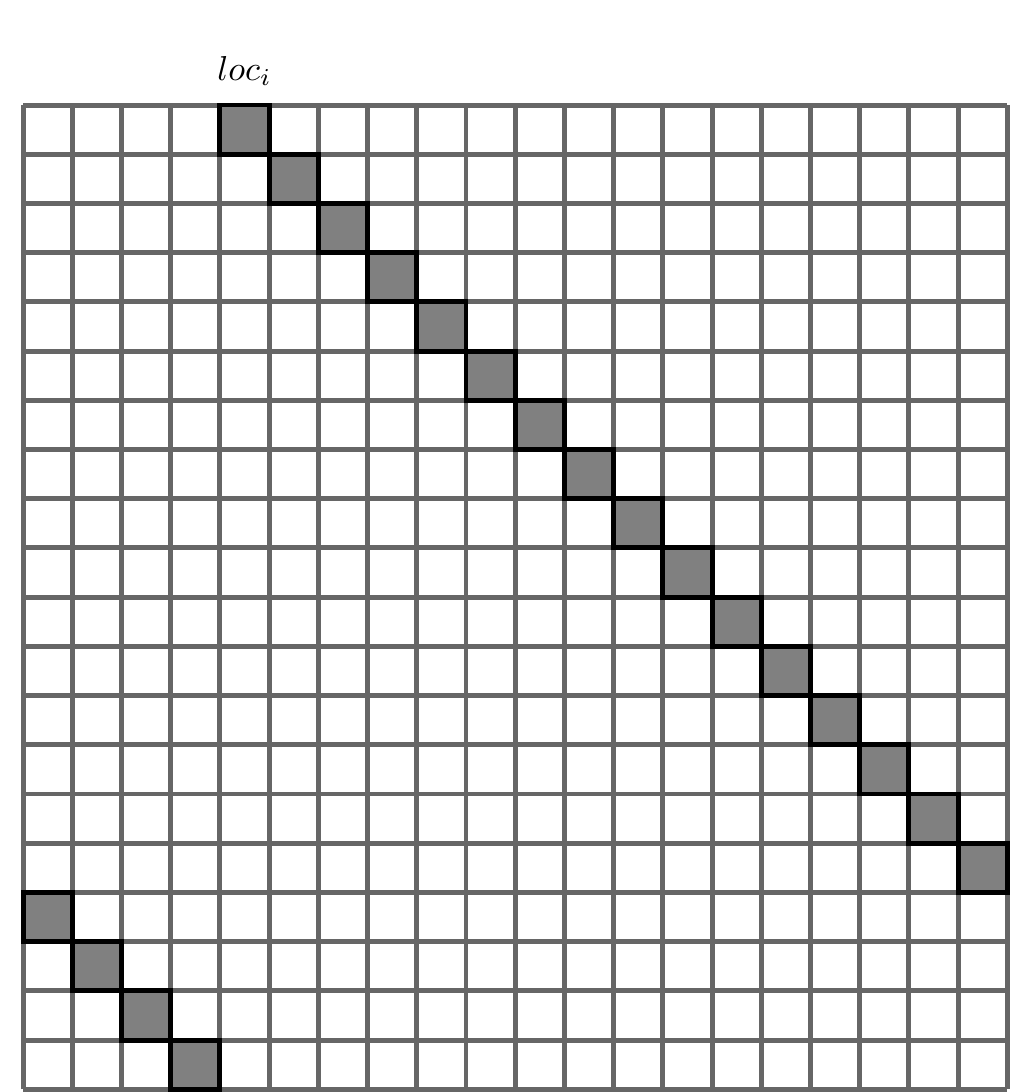}
\caption{Integer Doppler shift}
\label{fig:integerDopplerOnePath}
\end{subfigure}
\hspace{0.1\textwidth}
\begin{subfigure}{0.3\textwidth}
\centering
\includegraphics[width= \textwidth]{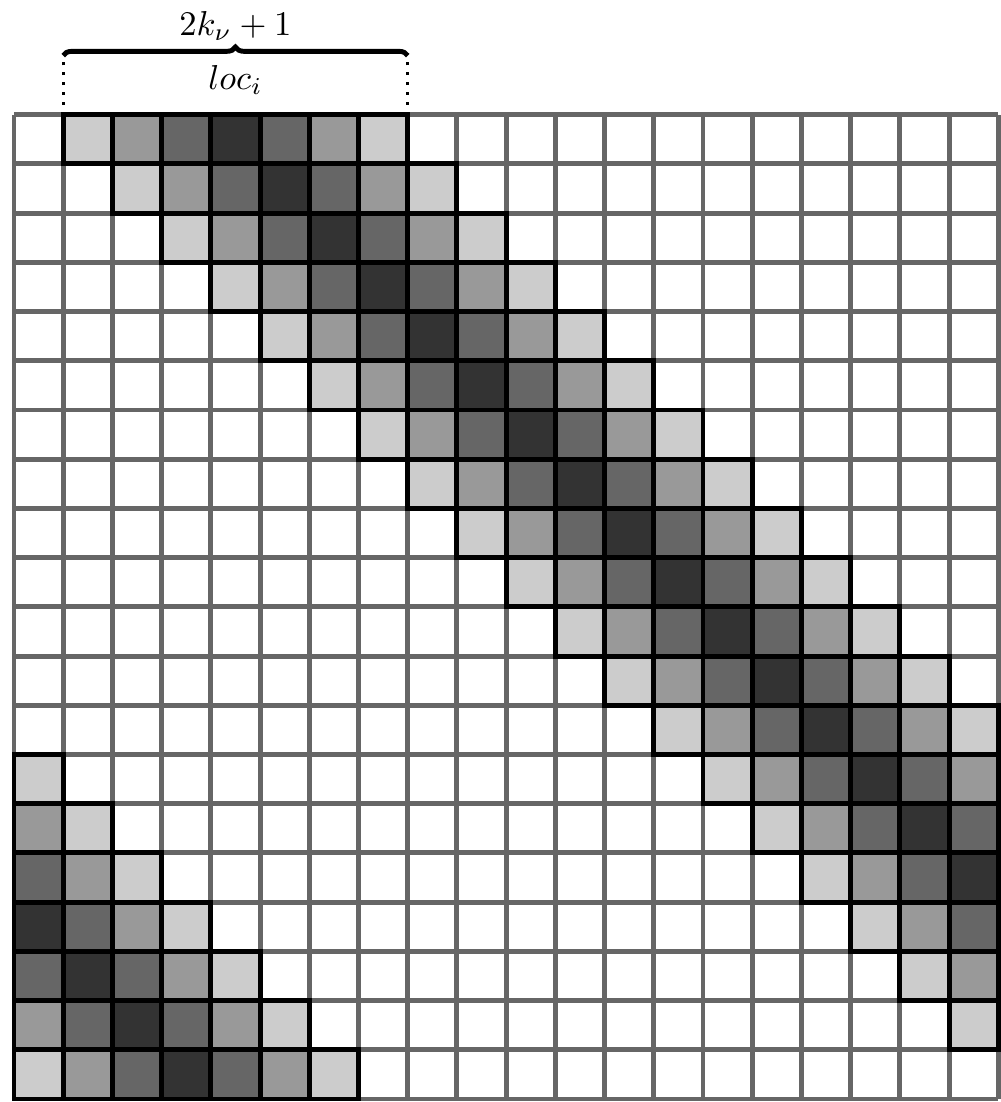}
\caption{Fractional Doppler shift}
\label{fig:FractionalDopplerOnePath}
\end{subfigure}
\caption{Structure of $\mathbf{H}_i$ for (a) integer and (b) fractional Doppler shifts.} 
\label{fig:OnePathStructure}
\end{figure}

\vspace{-6mm}
\subsection{AFDM Parameters}
The performance of DAFT-based modulation schemes critically depends on the choice of parameters $c_1$ and $c_2$. For instance, OCDM uses $c_1 = c_2 = \frac{1}{2N}$, whereas in DAFT-OFDM $c_2 = 0$ and $c_1$ are adapted to the delay-Doppler channel profile to minimize ICI. Nevertheless, both schemes fail to achieve full diversity in LTV channels, as shown in Section~\ref{sec_sim_result}. In the proposed AFDM, we find $c_1$ and $c_2$ for which the DAFT domain impulse response constitutes a full delay-Doppler representation of the channel. This allows AFDM to achieve full diversity in LTV channels, as shown in Section \ref{sec_diversity}.
For that, the non-zero entries in each row of $\mathbf{H}_i$ for each path $i\in\{1,\ldots,P\}$ should not coincide with the position of the non-zero entries of the same row of $\mathbf{H}_j$ for any $j\in\{1,\ldots,P\}$ such that $j\neq i$. Observing \eqref{eq:Hi_p_q_integer} and \eqref{eq:Hi_p_q_fractional}, the location of each path depends on its delay-Doppler information and AFDM parameters. For the integer and fractional Doppler shifts, $\mathrm{loc}_i$ and $\mathrm{loc}_{i, frac}$ are in the following range
\begin{equation}
    -\alpha_{\rm{max}} + 2Nc_1l_i \leq \mathrm{loc}_i \leq 
    \alpha_{\rm{max}} + 2Nc_1l_i,
    \label{eq:range_integer}
\end{equation}
\begin{equation}
    -\alpha_{\rm{max}}-k_\nu + 2Nc_1l_i  \leq \mathrm{loc}_{i, frac} \leq 
    \alpha_{\rm{max}} + k_\nu + 2Nc_1l_i ,
    \label{eq:range_fractional}
\end{equation}
respectively. Therefore, for the positions of the non-zero entries of $\mathbf{H}_i$ and $\mathbf{H}_j$ to not overlap, the intersection of the corresponding ranges of $\mathrm{loc}_i$ and $\mathrm{loc}_j$ for the integer case and $\mathrm{loc}_{i, frac}$ and $\mathrm{loc}_{j, frac}$ for the fractional case should be empty, i.e, 
\begin{equation}
\begin{aligned}
    \{-\alpha_{\rm{max}} + 2Nc_1l_i, ..., \alpha_{\rm{max}} + 2Nc_1l_i\}\cap
    \{-\alpha_{\rm{max}} + 2Nc_1l_j, ..., \alpha_{\rm{max}} + 2Nc_1l_j\}= \emptyset,
    \label{eq:intersec_int}
\end{aligned}
\end{equation}
{\small
\begin{equation}
\begin{aligned}
    \{-\alpha_{\rm{max}} - k_\nu +2Nc_1l_i, ..., \alpha_{\rm{max}} +k_\nu + 2Nc_1l_i\}\cap
    \{-\alpha_{\rm{max}}-k_\nu+ 2Nc_1l_j, ..., \alpha_{\rm{max}}+k_\nu + 2Nc_1l_j\}= \emptyset.
    \label{eq:intersec_frac}
    \end{aligned}
\end{equation}}
For the paths with different delays, assuming $l_i < l_j$, satisfying \eqref{eq:intersec_int} and \eqref{eq:intersec_frac} is equivalent to satisfying the constraints
\begin{align}
    &2Nc_1 > \frac{2\alpha_{\rm{max}}}{l_j - l_i},\\
    &2Nc_1 > \frac{2(\alpha_{\rm{max}} + k_\nu)}{l_j - l_i}.
\end{align}

If there is no sparsity in the time-domain impulse response of the channel, then the minimum value of $l_j - l_i $ is one, and for the integer case, $c_1$ is set to
\begin{equation}
\label{eq:opt_c1_int}
    {c_1 = \frac{2\alpha_{\max} + 1}{2N},}
\end{equation}
and for the fractional case, $c_1$ is chosen as
\begin{equation}
\label{eq:opt_c1_frac}
    {c_1 = \frac{2(\alpha_{\max} + \xi_\nu) + 1}{2N}},
\end{equation}
for some $\xi_\nu\leq k_\nu$. Through $\xi_\nu$, there is flexibility in setting $c_1$ and reducing pilot overhead (see section \ref{section_channelEst}) at the expense of $|\mathbf{H}_{\rm eff}|$ no longer being strictly circulant. Moreover, the only remaining condition for the DAFT-domain impulse response to constitute a full delay-Doppler representation of the channel is to ensure that the non-zero entries of any two matrices $\mathbf{H}_{i_{\min}}$ and $\mathbf{H}_{i_{\max}}$ corresponding to paths $i_{\min}$ and $i_{\max}$ with delays $l_{i_{\min}}\triangleq\min_{i=1\cdots P}l_i$ and $l_{i_{\max}}\triangleq\max_{i=1\cdots P}l_i$, respectively, do not overlap due to the modular operation in \eqref{eq:Hi_p_q_integer} and \eqref{eq:Hi_p_q_fractional}. This overlapping never occurs if $2\alpha_{\rm{max}}l_{\rm{max}} + 2\alpha_{\rm{max}} + l_{\rm{max}}< N$ for the integer case and $2(\alpha_{\rm{max}} + k_\nu)l_{\rm{max}} + 2(\alpha_{\rm{max}}+k_\nu) + l_{\rm{max}}< N$ for the fractional case. Since wireless channels are usually underspread (i.e., $l_{\rm{max}} \ll N$ and $\alpha_{\rm{max}} \ll N$), this condition is satisfied even for moderate values of $N$. 

 With this parameter setting, channel paths with different delay values or different Doppler frequency shifts become separated in the DAFT domain, resulting in $\textbf{H}_{\mathrm{eff}}$ having the structure shown in Fig.~\ref{fig:Channelpic} (for the fractional case, $\alpha_{\max}$ should be replaced with $\alpha_{\max} + \xi_\nu$). Thus, we get a delay-Doppler representation of the channel in the DAFT domain since the delay-Doppler profile can be determined from the positions of the non-zero entries in any row of $\textbf{H}_{\mathrm{eff}}$. This feature can be obtained by neither DAFT-OFDM (since by construction the effective channel matrix is made as close to diagonal as possible to reduce ICI), nor OCDM (since setting $c_1 =\frac{1}{2N}$, there may exist two paths $i\neq j$ such that the non-zero entries of $\mathbf{H}_i$ and $\mathbf{H}_j$ coincide under some delay-Doppler profiles of the channel). In the next section, we show that this unique feature of AFDM translates into being diversity order optimal in LTV channels. 
\begin{figure}
  \centering
  \includegraphics[scale=.6]{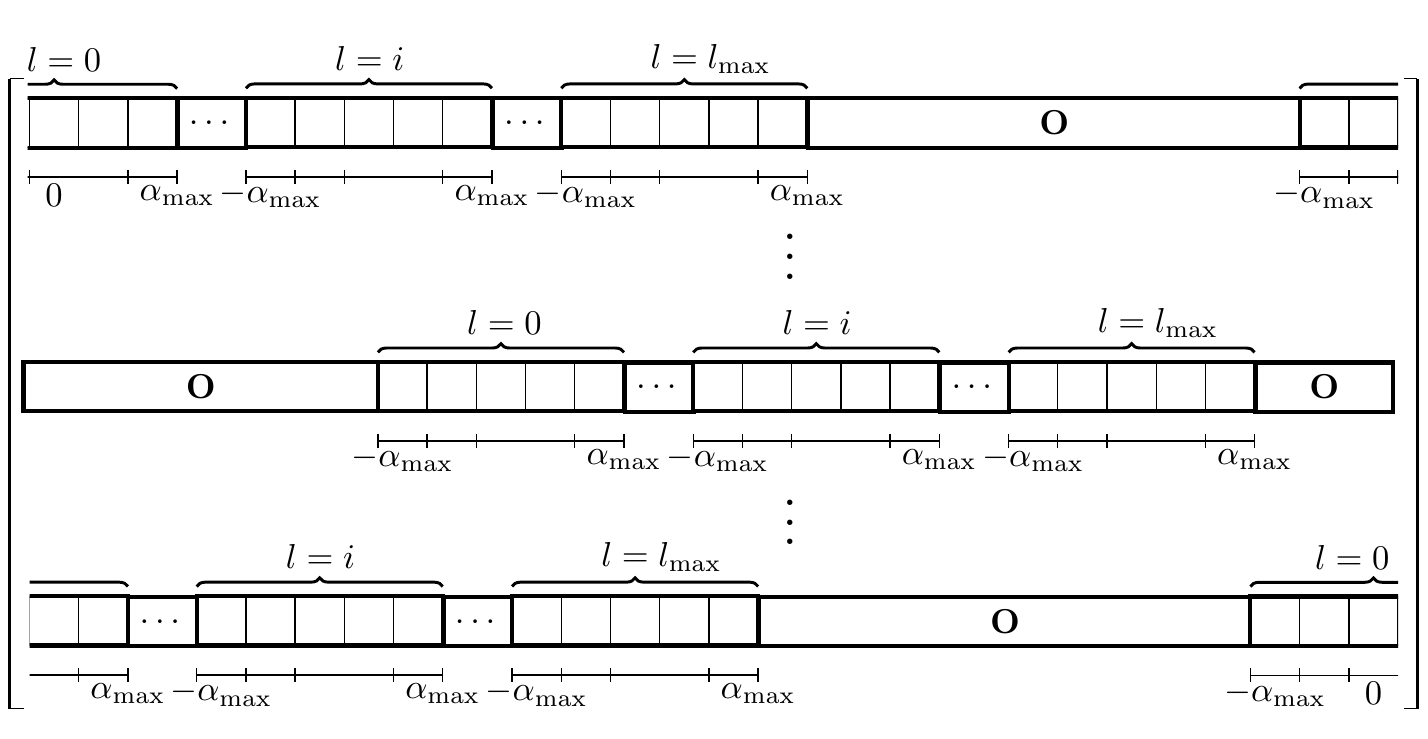}
  \caption{Structure of $\textbf{H}_{\mathrm{eff}}$ in AFDM for the integer case.}
  \label{fig:Channelpic}
\end{figure}

\section{Diversity Analysis}\label{sec_diversity}
In this section, we analyze the diversity order of AFDM. Due to space limitations, diversity analysis is presented only in the case of integer Doppler shifts. However, Theorem \ref{theo:main} given below, also holds for the fractional Doppler case.
To this end, we rewrite \eqref{eq_rec} as 
\begin{equation}
{\mathbf y}= \sum _{i=1}^{P} h_{i} {\mathbf H}_{i}{\mathbf x}+ \widetilde {\mathbf w} = {\mathbf{\Phi }}({\mathbf {x}}) {\mathbf h} + \widetilde {\mathbf w},\label{eq:diff_paths}
\end{equation}
where $\mathbf{h}=[h_1,h_2, \ldots,h_P]$ is a $P\times 1$ vector and ${\boldsymbol{\Phi}}({\mathbf{x}})$ is the $N \times P$ concatenated matrix 
\begin{equation}
    {\boldsymbol{\Phi}}({\mathbf x}) = [\mathbf{H}_1\mathbf{x} \mid \ldots \mid \mathbf{H}_P\mathbf{x}].
    \label{eq:phi_x}
\end{equation}
We now express the pairwise error probability (PEP). First, we normalize the elements of $\mathbf {x}$ so that the average energy of $\mathbf {x}$ is one, thus the signal-to-noise ratio (SNR) is given by $\frac{1}{N_0}$. Assuming perfect channel state information and ML detection at the receiver, the conditional PEP between $\mathbf{x}_m$ and $\mathbf{x}_n$, i.e., transmitting symbol $\mathbf{x}_m$ and deciding in favor of $\mathbf{x}_n$ at the receiver, can be upper bounded as
\begin{equation}
P(\mathbf {x}_{m}\rightarrow \mathbf {x}_{n}) \leq \prod \limits _{l=1}^{r}\frac {1}{1+\,\,\dfrac {\lambda _{l}^{2}}{4PN_0}}\label{upper_bound}
\end{equation}
where  $\lambda_l$ is the $l$-th singular value of the matrix $\mathbf{\Phi}(\boldsymbol{\delta}^{(m, n)})$ and $\boldsymbol{\delta}^{(m, n)} = \mathbf{x}_m - \mathbf{x}_n$.
At high SNR, \eqref{upper_bound} becomes
\begin{equation}
P(\mathbf {x}_{m}\rightarrow \mathbf {x}_{n}) \leq \frac {1}{\frac{1}{{N_0}^{r}}\prod \limits _{l=1}^{r} \dfrac {\lambda _{l}^{2}}{4P}}.\label{BER}
\end{equation}
We can see from \eqref{BER} that the exponent of the SNR term, $\frac{1}{N_0}$, is r, which is equal to the rank of the matrix $\boldsymbol{\Phi}(\boldsymbol{\delta}^{(m, n)})$. The overall bit error rate (BER) is dominated by the PEP with the minimum value of r, for all $m, n, m\neq n$. Hence, the diversity of AFDM is given by
\begin{equation} 
\rho = \min _{m,n ~m\neq n}~\text {rank}(\boldsymbol{\Phi }(\boldsymbol{\delta}^{(m, n)})) \leq P.
\end{equation}

For exposition convenience, we drop $(m, n)$ from $\boldsymbol{\delta}^{(m, n)}$. 
First, we show that for DAFT-based multi-carrier schemes, a necessary (but not sufficient) condition to achieve full diversity, i.e., $\rho=P$, is
\begin{equation}
\label{eq:cond_non_colocated}
\forall  i,j\in [1, \cdots, P],\quad
\mathrm{loc}_i \neq \mathrm{loc}_j.
\end{equation}
The above condition can hold for AFDM, but not for OCDM and DAFT-OFDM. For that sake, we assume that there exist channel paths $i$ and $j$ such that the locations of their corresponding non-zero elements in the channel matrix are the same, i.e., $\mathrm{loc}_i = \mathrm{loc}_j$. We then show that full diversity cannot be achieved under this assumption. Indeed, for full diversity, the following matrix composed of the columns $i$ and $j$ of $\boldsymbol{\Phi}(\boldsymbol{\delta})$ should be of rank 2 for all possible values of $\boldsymbol{\delta}$
{\small
\begin{equation}
\label{eq:mat_i_j}
\left[{\begin{array}{cc}{H}_i[1, q_0]{\delta}[{q_0}]& {H}_j[1, q_0]{\delta}[{q_0}] \\ {H}_i[2, (q_0+1)_N]{\delta}[{(q_0+1)_N}] & {H}_j[2, (q_0+1)_N]{\delta}[{(q_0+1)_N}] \\ \vdots & \vdots \\ {H}_i[N, (q_0+N-1)_N]{\delta}[{(q_0+N-1)_N}] & {H}_j[N, (q_0+N-1)_N]{\delta}[{(q_0+N-1)_N}       ]
\end{array}}\right],
\end{equation}
}
where $q_0 = \mathrm{loc}_i = \mathrm{loc}_j$. However, when $\boldsymbol{\delta}$ is such that it has a single non-zero element, the two columns of the matrix in \eqref{eq:mat_i_j} are dependent, hence the rank of this matrix cannot be 2. Consequently, the rank of $\boldsymbol{\Phi}(\boldsymbol{\delta})$ cannot be $P$ and the condition in \eqref{eq:cond_non_colocated} is thus necessary for full diversity. Therefore, proving that AFDM achieves full diversity is equivalent to proving that tuning $c_2$ can make matrix $\boldsymbol{\Phi}(\boldsymbol{\delta})$ to be full rank. 

\begin{thm}\label{theo:main}
For a linear time-varying channel with a maximum delay $l_{\rm{max}}$ and maximum normalized Doppler shift $\alpha_{\rm{max}}$, AFDM with $c_1$ satisfying \eqref{eq:opt_c1_int} achieves full diversity ($\rho=P$) if 
\begin{align}
    2\alpha_{\rm{max}} + l_{\rm{max}} + 2\alpha_{\rm{max}}l_{\rm{max}} < N. \label{eq:cond_1}
\end{align}     
\end{thm}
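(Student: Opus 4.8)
The plan is to establish the sufficiency direction. Given that the choice \eqref{eq:opt_c1_int} together with \eqref{eq:cond_1} already forces the necessary condition \eqref{eq:cond_non_colocated} (all $\mathrm{loc}_i$ pairwise distinct and non-overlapping even after the modulo-$N$ reduction), it remains to exhibit a value of $c_2$ for which $\boldsymbol{\Phi}(\boldsymbol{\delta})$ in \eqref{eq:phi_x} has rank $P$ for \emph{every} nonzero difference vector $\boldsymbol{\delta}$. Since the constellation is finite, $\boldsymbol{\delta}$ ranges over a finite set, so it suffices to show that for each such $\boldsymbol{\delta}$ the set of ``bad'' $c_2$ (those making $\mathrm{rank}\,\boldsymbol{\Phi}(\boldsymbol{\delta})<P$) has Lebesgue measure zero; a single good $c_2$ then lies outside the finite union. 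First I would fix $\boldsymbol{\delta}\neq\mathbf 0$, pick an index $q_0$ with $\delta[q_0]\neq 0$, and form the $P\times P$ submatrix $\mathbf M$ of $\boldsymbol{\Phi}(\boldsymbol{\delta})$ obtained by keeping the $P$ rows $p_i=(q_0-\mathrm{loc}_i)_N$. These rows are distinct precisely because the $\mathrm{loc}_i$ are distinct, and by \eqref{eq:Hi_p_q_integer} the diagonal entry $\mathbf M[i,i]$ equals a unit-modulus phase times $\delta[q_0]$, hence is nonzero.

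The core step is to read $\det\mathbf M$ as a function of $c_2$ and show it is not identically zero. Expanding $\det\mathbf M$ by the Leibniz formula, each permutation $\sigma$ contributes $\mathrm{sgn}(\sigma)\prod_i \mathbf M[i,\sigma(i)]$, and the only $c_2$-dependence sits in the factors $e^{\imath 2\pi c_2(q_{i\sigma(i)}^2-p_i^2)}$ with $q_{ij}=(p_i+\mathrm{loc}_j)_N=(q_0-\mathrm{loc}_i+\mathrm{loc}_j)_N$. Collecting these, the $\sigma$-term carries the single complex exponential $e^{-\imath 4\pi c_2\sum_i \mathrm{loc}_i\mathrm{loc}_{\sigma(i)}}$, all remaining quadratic contributions being $\sigma$-independent, so $\det\mathbf M(c_2)$ is a finite exponential sum in $c_2$ whose ``frequencies'' are $\sum_i\mathrm{loc}_i\mathrm{loc}_{\sigma(i)}$. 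By the rearrangement inequality, and because the $\mathrm{loc}_i$ are pairwise distinct, $\sum_i\mathrm{loc}_i\mathrm{loc}_{\sigma(i)}$ is maximized \emph{uniquely} at $\sigma=\mathrm{id}$, whose coefficient is $\prod_i\mathbf M[i,i]$ up to a unit-modulus phase, i.e.\ proportional to $\delta[q_0]^P\neq 0$. A uniquely extremal frequency carrying a nonzero coefficient cannot be cancelled, so $\det\mathbf M(\cdot)$ is a nontrivial analytic (trigonometric) function of $c_2$ and therefore vanishes only on a measure-zero set.

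Putting the pieces together, taking the union over the finitely many $\boldsymbol{\delta}$ produces a measure-zero exceptional set of $c_2$; any $c_2$ outside it makes every $\boldsymbol{\Phi}(\boldsymbol{\delta})$ of rank $P$, so $\rho=\min_{m\neq n}\mathrm{rank}\,\boldsymbol{\Phi}(\boldsymbol{\delta}^{(m,n)})=P$. A convenient explicit choice is any $c_2$ with $e^{\imath 2\pi c_2}$ transcendental, which simultaneously avoids all the bad values at once, since the coefficients of the exponential sum are sums of roots of unity scaled by Gaussian integers and hence algebraic.

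I expect the main obstacle to be the modular wraparound in $q_{ij}=(q_0-\mathrm{loc}_i+\mathrm{loc}_j)_N$: the clean identity ``frequency $=\sum_i\mathrm{loc}_i\mathrm{loc}_{\sigma(i)}$'' and the rearrangement argument require that squaring essentially commute with the reduction, i.e.\ that no spurious $N$-shift scramble the quadratic exponents. This is exactly where \eqref{eq:cond_1} is indispensable: it confines all $\mathrm{loc}_i$ to a common window of length $<N$ (cf.\ the ranges in \eqref{eq:range_integer}), so the relevant indices can be chosen without crossing the modular boundary and the uniquely maximal term genuinely survives. Verifying this confinement carefully---and checking that the nonzero diagonal coefficient is not accidentally matched by another $\sigma$ once wraparound is ruled out---is the delicate part; the finiteness and measure-zero wrap-up, together with the rearrangement inequality, are then routine.
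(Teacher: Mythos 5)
Your strategy (restrict to a $P\times P$ submatrix $\mathbf{M}$, expand $\det\mathbf{M}$ by Leibniz, isolate a uniquely extremal $c_2$-frequency via the rearrangement inequality, then finish with a measure-zero/transcendence argument over the finitely many $\boldsymbol{\delta}$) is genuinely different from the paper's proof, which assumes a vanishing combination $\sum_i\beta_i\mathbf{H}_i\boldsymbol{\delta}=\mathbf{0}$ and derives a contradiction from just two entries of that vector, using the irrationality of $c_2$ and the fact that $\boldsymbol{\delta}$ has Gaussian-integer entries. Unfortunately, your proof has a genuine gap exactly at the point you yourself flag as delicate, and the fix you propose rests on a false premise. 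Under \eqref{eq:opt_c1_int} one has $\mathrm{loc}_i=(\alpha_i+(2\alpha_{\max}+1)l_i)_N$, and condition \eqref{eq:cond_1} only guarantees that the \emph{unreduced} integers $\alpha_i+(2\alpha_{\max}+1)l_i$ are pairwise distinct inside a window of $N$ consecutive integers $[-\alpha_{\max},\,N-1-\alpha_{\max}]$; it does not keep the reduced values away from the modular boundary. For instance, a path with $l_i=0,\ \alpha_i=1$ has $\mathrm{loc}_i=1$ while a path with $l_j=0,\ \alpha_j=-1$ has $\mathrm{loc}_j=N-1$, both admissible under \eqref{eq:cond_1}; then for \emph{every} choice of $q_0$ at least one of $q_{ij}=(q_0-\mathrm{loc}_i+\mathrm{loc}_j)_N$, $q_{ji}=(q_0-\mathrm{loc}_j+\mathrm{loc}_i)_N$ wraps (one needs $q_0+N-2\leq N-1$ and the other $q_0-N+2\geq 0$, which are incompatible). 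So wraparound cannot be avoided by "choosing the indices", and in any case $q_0$ is not free: it must lie in the support of the adversarial $\boldsymbol{\delta}$.

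This matters because the entries of $\mathbf{H}_i$ in \eqref{eq:Hi_p_q_integer} carry the phase $e^{\imath 2\pi c_2(q^2-p^2)}$ with $q,p$ the \emph{reduced} indices. When a term wraps, $q_{i\sigma(i)}^2=(q_0-\mathrm{loc}_i+\mathrm{loc}_{\sigma(i)})^2-2k_{i\sigma(i)}N(q_0-\mathrm{loc}_i+\mathrm{loc}_{\sigma(i)})+k_{i\sigma(i)}^2N^2$ acquires $\sigma$-dependent corrections of order up to $N^2$, so the frequency of the $\sigma$-term is no longer $\mathrm{const}-2\sum_i\mathrm{loc}_i\mathrm{loc}_{\sigma(i)}$. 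Since the rearrangement gap $\sum_i(\mathrm{loc}_i-\mathrm{loc}_{\sigma(i)})^2$ can be as small as $2$, these corrections can reorder or collide frequencies; the unique extremality of the identity term, which is the linchpin of your non-vanishing argument, is then unproven, and nothing excludes cancellation of the identity coefficient by another permutation sharing its frequency. Your closing transcendence remark does not repair this: it presupposes that the Laurent polynomial in $e^{\imath 2\pi c_2}$ is not identically zero, which is precisely what is at stake. Ruling out such collisions with wraparound is a nontrivial Diophantine statement that your proposal leaves unaddressed. The paper's proof sidesteps the issue because it only needs a pairwise quantity: the exponent $\mathrm{loc}_2^2-(\mathrm{loc}_2+1)_N^2+\mathrm{loc}_1^2-(\mathrm{loc}_1+1)_N^2$ is nonzero in every wraparound configuration (it equals $-2(\mathrm{loc}_1+\mathrm{loc}_2+1)$ when neither location wraps, and is at least $(N-2)^2$ when one does). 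To complete your route you would either have to prove the no-collision lemma under wraparound, or fall back on a two-row argument of the paper's type.
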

\begin{proof}
See Appendix \ref{sec_appen_proof}.
\end{proof}
\vspace{-5mm}
\section{Low-Complexity Weighted MRC-based DFE Detection}\label{sec_detection}
Although for showing the diversity order of AFDM ML detection is used, it is prohibitively complex to implement in real-world communication systems. For that, in this section, we propose a low-complexity detector. The first step is to place some null symbols that allow approximating the truncated part of $\textbf{H}_{\rm{eff}}$ as a band matrix. This also simplifies the input-output relation as the modular operation is no longer needed (see Fig.~\ref{fig:ChannelForMMSE_withoutpilot}). Note that these symbols do not entail extra overhead as they can serve not only the proposed detection algorithms but also embedded pilot aided channel estimation. Due to the structure of ${\bf H}_{\rm eff}$, the number of the null guard symbols should be greater than
\begin{equation}
Q \triangleq (l_{\max} + 1)(2(\alpha_{\max} + \xi_\nu) + 1)-1.\label{eq:Q_val}
\end{equation}
Taking into account the zero padding, the vector of DAFT domain received samples writes as
\begin{equation}
    {\mathbf y} = \underline{\mathbf H}_{\mathrm{eff}} \underline{\mathbf x}+ \widetilde{\underline{\mathbf w}},\label{eq:detectionEq}
\end{equation}
where $\underline{\mathbf x}$ and $\underline{\mathbf H}_{\mathrm{eff}}$ are the truncated parts of ${\mathbf x}$ and $\mathbf{H}_{\mathrm{eff}}$, respectively (see Fig.~\ref{fig:ChannelForMMSE_withoutpilot}). They can be expressed using the matrix $\mathbf{T} = [\mathbf{I}_N]_{Q-(\alpha_{\max}+\xi_\nu):N-(\alpha_{\max}+\xi_\nu)-1, :}$ as $\underline{\mathbf x} = \mathbf{T}\mathbf{x}$ and  $\underline{\mathbf H}_{\mathrm{eff}} = {\mathbf H}_{\mathrm{eff}} \mathbf{T}^{{H}}$.
\begin{figure}
\centering
\begin{minipage}{0.37\textwidth}
  \centering
  \includegraphics[width= \textwidth]{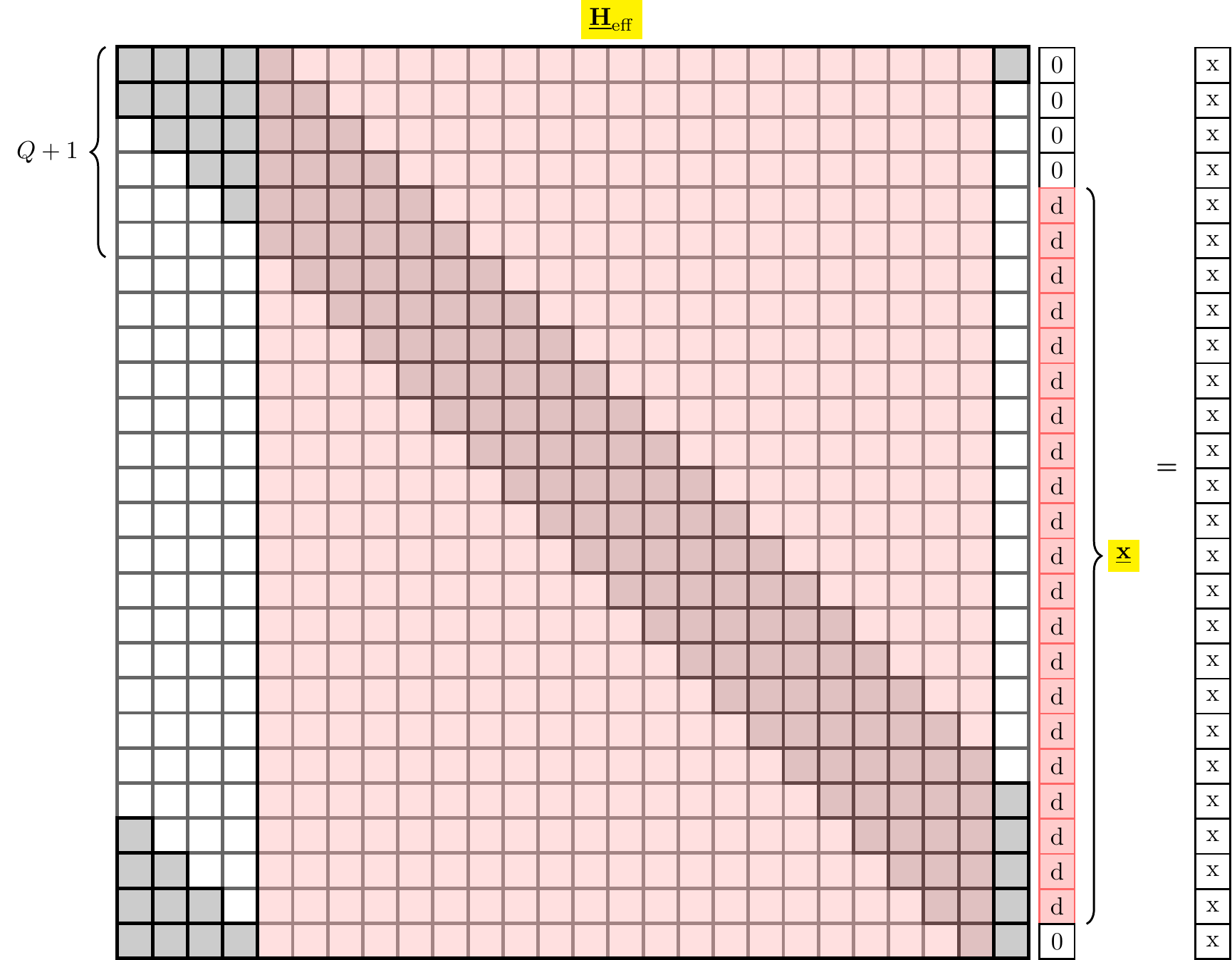}
  \captionof{figure}{Truncated parts of ${\mathbf x}$ and $\mathbf{H}_{\mathrm{eff}}$}
  \label{fig:ChannelForMMSE_withoutpilot}
\end{minipage}
\hspace{0.1\textwidth}
\begin{minipage}{0.5\textwidth}
\centering
 \includegraphics[width= \textwidth]{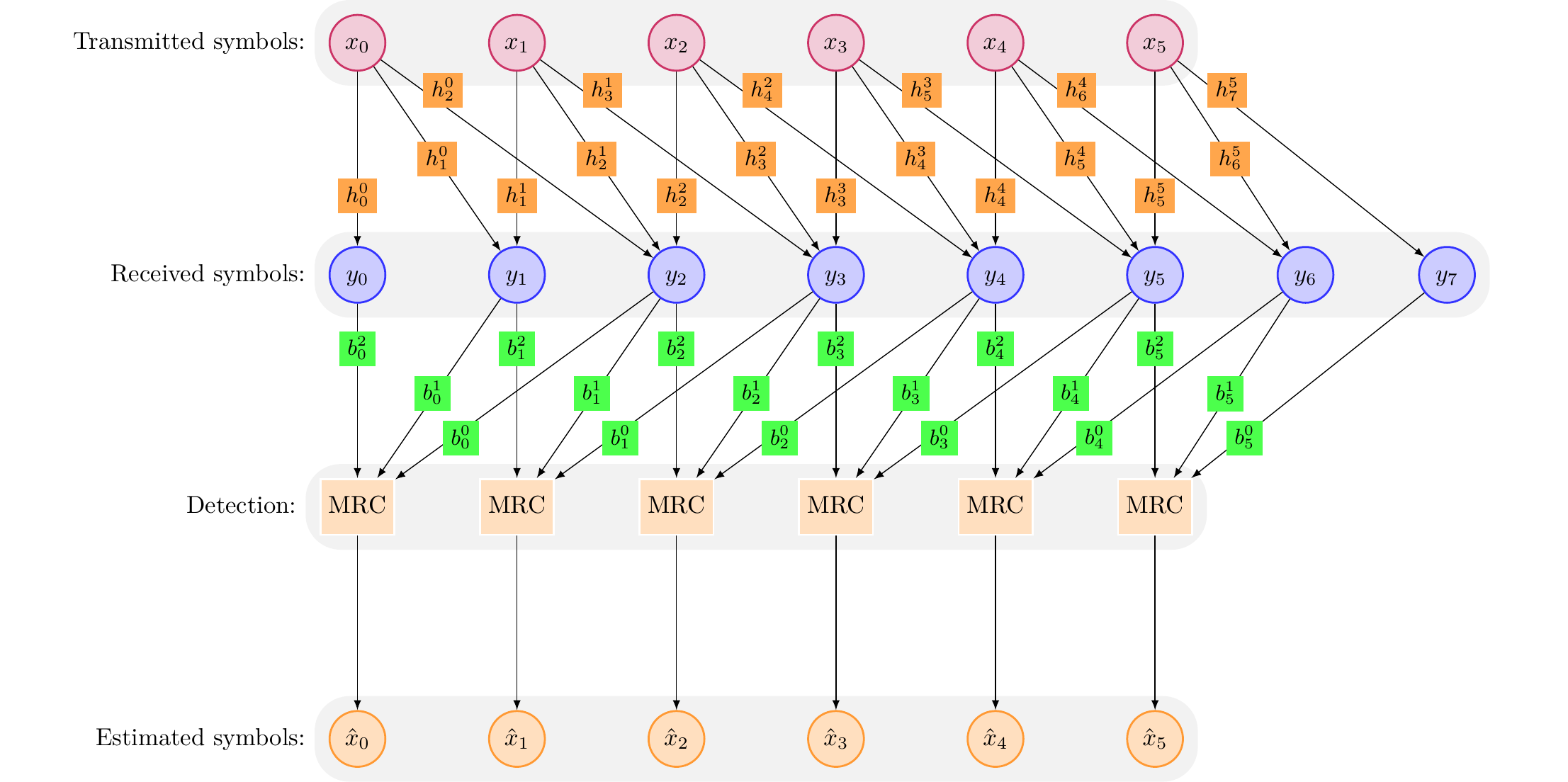}
  \captionof{figure}{Weighted MRC operation for $N = 8$ with a 3-path channel with $Q=2$.}
  \label{fig:MRC_detector}
\end{minipage}
\end{figure}
Using LMMSE equalization for \eqref{eq:detectionEq} requires $\mathcal{O}(N^3)$ flops, which can be prohibitive for large $N$. We hence propose a weighted MRC-based DFE exploiting the sparse representation of the communication channel provided by AFDM.

As shown in Fig.~\ref{fig:ChannelForMMSE_withoutpilot},
$\underline{\mathbf H}_{\mathrm{eff}}$ has $L$ non-zero entries per column, where $L = P$ for the integer Doppler shift case and $L= (2\xi_\nu + 1)P$ for fractional Doppler shift case, with $L\leq Q$ in both cases. Each of these non-zero entries of a column of $\underline{\mathbf H}_{\mathrm{eff}}$ provides a copy of the data symbol corresponding to the index of this column. We propose a detection scheme where each data symbol is detected from a weighted maximum ratio combining (MRC) of its $L$ channel-impaired received copies. Fig.~\ref{fig:MRC_detector} depicts an instance of this detector for AFDM with $N = 8$ and a $3$-path channel with $Q=2$. The proposed detector is iterative, wherein each iteration, the estimated inter symbol interference is canceled in the branches selected for the combining. Considering the structure of $\underline{\mathbf H}_{\mathrm{eff}}$, it can be seen that each received symbol $y[k]$ is given by
\vspace{-4mm}

\begin{equation}
    y[k] = \sum_{i = 0}^{L-1}\underline{ H}_{\mathrm{eff}}[k, {p}_k^i]x[{p}_k^i],
\end{equation}
where ${p}_k^i$ is the column index of the $i$-th path coefficient in row $k$ of matrix $\underline{\mathbf H}_{\mathrm{eff}}$. Let $b_k^i$ be the channel impaired input symbol $x[k]$ in the received samples $y[{{q}_k^i}]$ after canceling the interference from other input symbols, where ${q}_k^i$ is the row index of the $i$-th path coefficient in column $k$ of matrix $\underline{\mathbf H}_{\mathrm{eff}}$. In each iteration, assuming estimates of the input symbols $x[k]$ are available, either from the current iteration (for ${p}_{{q}_k^i}^j < k, \quad j = 0, ..., L-1$) or previous iteration (for $ {p}_{{q}_k^i}^j> k, \quad j = 0, ..., L-1$), $b_k^i$ can be written as 
\begin{equation}
    b_k^i = y[{{q}_k^i}] - \sum_{ {p}_{{q}_k^i}^j< k} \underline{ H}_{\mathrm{eff}}[{{{q}_k^i}}, { {p}_{{q}_k^i}^j}]\hat{x}[{ {p}_{{q}_k^i}^j}]^{(n)}- \sum_{ {p}_{{q}_k^i}^j> k} \underline{H}_{\mathrm{eff}}[{{q}_k^i}, { {p}_{{q}_k^i}^j}]\hat{x}[{ {p}_{{q}_k^i}^j}]^{(n-1)},\label{eq:b_k_i}
\end{equation}
where superscript $(n)$ denotes the $n$-th iteration. It can be seen that for each symbol $x[k]$, we need to compute $L$ scalars. This operation has complexity order of $\mathcal{O}(L^2)$. However, when computing $b_k^i$ for all symbols $k$, there are some redundant operations involved that can be avoided by instead computing $b_k^i$ as follows
\vspace{-3mm}
\begin{equation}
    b_k^i = \Delta y[{{q}_k^i}]^{(n)} + \underline{H}_{\mathrm{eff}}[{q}_k^i,k]\hat{x}[{k}]^{(n - 1)}.\label{eq:new_bki}
\end{equation}
Here, $\Delta y_{q_k(i)}^{(n)}$ is the residual error remaining while reconstructing the received symbols and is given by
\begin{equation}
\vspace{-2mm}
    \Delta y[{{q}_k^i}]^{(n)} = y[{{q}_k^i}] - \sum_{ {p}_{{q}_k^i}^j< k} \underline{ H}_{\mathrm{eff}}[{{{q}_k^i}}, { {p}_{{q}_k^i}^j}]\hat{x}[{ {p}_{{q}_k^i}^j}]^{(n)}- \sum_{ {p}_{{q}_k^i}^j\geq k} \underline{H}_{\mathrm{eff}}[{{q}_k^i}, { {p}_{{q}_k^i}^j}]\hat{x}[{ {p}_{{q}_k^i}^j}]^{(n-1)}.\label{eq:delta_y}
    \vspace{-2mm}
\end{equation}
Define $g_k^{(n)}$ and $d$ as
\begin{equation}
     g_k^{(n)} \triangleq \sum_{i = 0}^{L-1}\underline{ H}_{\mathrm{eff}}^{*}[{q}_k^i, k]b_k^i = \sum_{i = 0}^{L-1}\underline{ H}_{\mathrm{eff}}^{*}[{q}_k^i, k]\Delta y[{q}_k^i]^{(n)} + dx[k]^{(n - 1)},
\end{equation}
\begin{equation}
    d \triangleq \sum_{i = 0}^{L-1}|\underline{ H}_{\mathrm{eff}}[{q}_k^i, k]|^2.
\end{equation}
It should be noted that since $|{\mathbf H}_{\mathrm{eff}}|$ is a circulant matrix, the value of $d$ is independent of $k$ and needs to be computed only once.
We now denote the SNR by $\gamma$. Instead of directly using $g_k^{(n)}/d$ as the estimate of $x_k$ (which would have amounted to using the MRC criterion), we define the symbol estimate as
\begin{equation}
\hat{x}[k]^{(n)} = c_k^{(n)},
\end{equation}
\begin{equation}
    c_k^{(n)}  \triangleq \frac{g_k^{(n)}}{d + \gamma^{-1}} = \frac{1}{d + \gamma^{-1}}\sum_{i = 0}^{L-1}\underline{ H}_{\mathrm{eff}}^{*}[{q}_k^i, k]\Delta y[{ {q}_k^i}]^{(n)} + \frac{d}{d + \gamma^{-1}}\hat{x}[k]^{(n-1)}.
\end{equation}
We later show (see Section \ref{subsec:rel_GS}) that this weighting of $g_k^{(n)}$ while computing $\hat{x}_k^{(n)}$ guarantees that the iterative detection algorithm converges to the LMMSE estimate of the symbols vector $\mathbf{\underline{x}}$.
In each iteration, after estimation of each symbol $x[k]^{(n)}$, the values of $\Delta y[{{q}_k^i}]^{(n)}$ for $i=0,…,L-1$ need to be updated using
\begin{equation}
    \Delta y[{{q}_k^i}]^{(n)} = \Delta y[{{q}_k^i}]^{(n)} - \underline{ H}_{\mathrm{eff}}^{*}[{q}_k^i, k]x[k]^{(n)} - x[k]^{(n-1)}).
\end{equation}
Once all symbols are estimated, they are used for interference cancellation in the next iteration.
The algorithm continues until the maximum number of iterations ($n_{\rm{iter}}$) is reached or the updated input symbol vector is close enough (less than $\epsilon$) to the previous one, as summarized in Algorithm \ref{algo:MRC_detection_lowcomp}.
Computing the complexity of Algorithm \ref{algo:MRC_detection_lowcomp} is straightforward as it only involves scalar operations. Step 3 to step 8 requires $2L$ CMs, $3L+1$ CAs and 1 CD. Therefore, its total complexity is $n_{\rm{iter}}(5L+1)(N-Q)$. 
\begin{algorithm}
  \SetAlgoLined
  \KwData{$\underline{\mathbf H}_{\mathrm{eff}}$, $d$, $\mathbf{y}$, $\hat{\mathbf{x}}^{(0)} = \mathbf{0}$, $\Delta \mathbf{y}^{(0)} = \mathbf{y}$}
   {\For{n = 1 : $n_{\rm{iter}}$}{
     \For{k = 0 : N-Q-1}{    
        
         $g_k^{(n)} =  \sum_{i = 0}^{L-1}\underline{ H}_{\mathrm{eff}}^{*}[{q}_k^i, k]\Delta y[{q}_k^i]^{(n)} + dx[k]^{(n - 1)}$

         $c_k^{(n)} = \frac{g_k^{(n)}}{d + \gamma^{-1}}$
         
         $\hat{x}[k]^{(n)} = c_k^{(n)}$
         
         \For{i = 0 : L-1}{
         $  \Delta y[{{q}_k^i}]^{(n)} = \Delta y[{{q}_k^i}]^{(n)} - \underline{ H}_{\mathrm{eff}}^{*}[{q}_k^i, k](x[k]^{(n)} - x[k]^{(n-1)})$
         }
        }
        \lIf{$||\hat{\mathbf{x}}^{(n)} - \hat{\mathbf{x}}^{(n-1)}|| < \epsilon$}{EXIT}
  }}
  \caption{Weighted MRC-based DFE detection}
  \label{algo:MRC_detection_lowcomp}
  
\end{algorithm}
\subsubsection{Convergence}
The detector convergence is analyzed using properties of iterative methods for linear systems. To this end, Algorithm \ref{algo:MRC_detection_lowcomp} can be expressed in the matrix form as
\begin{equation}
    \hat{\mathbf{x}}^{(n)} = \frac{d}{\gamma^{-1} + d}\hat{\mathbf{x}}^{(n-1)} + \frac{1}{\gamma^{-1} + d}(\underline{\mathbf H}_{\mathrm{eff}}^{H}\mathbf{y} - \mathbf{L}\hat{\mathbf{x}}^{(n)} - (\mathbf{L}^{H}+ \mathbf{D})\hat{\mathbf{x}}^{(n-1)}),\label{eq:x_nMatrixForm}
\end{equation}
where $\mathbf{D} = d\mathbf{I}$, $\mathbf{L}$ and $\mathbf{L}^H$ are the matrices containing diagonal elements, strictly lower and upper triangular parts  of the Hermitian matrix $\underline{\mathbf H}_{\mathrm{eff}}^{H}\underline{\mathbf H}_{\mathrm{eff}}$, respectively. Equation \eqref{eq:x_nMatrixForm} can be rewritten in the form
\begin{equation}
    \hat{\mathbf{x}}^{(n)} =-\mathbf{S}^{-1}(\mathbf{R} - \mathbf{S})\hat{\mathbf{x}}^{(n-1)} + \mathbf{M}^{-1}\mathbf{b},\label{eq:x_n_GS}
\end{equation}
where $\mathbf{S} = (\gamma^{-1} + d)\mathbf{I} + \mathbf{L}$, $\mathbf{R} =\underline{\mathbf H}_{\mathrm{eff}}^{H}\underline{\mathbf H}_{\mathrm{eff}}+\gamma^{-1}\mathbf{I}$ and $\mathbf{b} = \underline{\mathbf H}_{\mathrm{eff}}^{H}\mathbf{y}$.
The iteration in \eqref{eq:x_n_GS} is convergent if the spectral radius of the matrix $-\mathbf{S}^{-1}(\mathbf{R} - \mathbf{S})$, denoted as $\rho(-\mathbf{S}^{-1}(\mathbf{R} - \mathbf{S}))$, is strictly smaller than one \cite{bjorck1996numerical,saad2003iterative}.
\vspace{-3mm}
\begin{thm}\label{theo:convergence}
The iteration in \eqref{eq:x_n_GS} is convergent (i.e., $\rho(-\mathbf{S}^{-1}(\mathbf{R} - \mathbf{S})) < 1$) if $\mathbf{R} = \underline{\mathbf H}_{\mathrm{eff}}^{H}\underline{\mathbf H}_{\mathrm{eff}} + \gamma^{-1}\mathbf{I}$ is a positive definite Hermitian matrix.
\end{thm}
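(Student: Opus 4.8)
The plan is to recognize the iteration \eqref{eq:x_n_GS} as a matrix splitting method applied to the linear system $\mathbf{R}\mathbf{x}=\mathbf{b}$ and to invoke the classical convergence criterion for such splittings when the system matrix is Hermitian positive definite. Writing $\mathbf{R}=\mathbf{M}-\mathbf{N}$ with $\mathbf{M}=\mathbf{S}$ and $\mathbf{N}=\mathbf{S}-\mathbf{R}$, the iteration matrix becomes $-\mathbf{S}^{-1}(\mathbf{R}-\mathbf{S})=\mathbf{M}^{-1}\mathbf{N}$, so the claim $\rho(-\mathbf{S}^{-1}(\mathbf{R}-\mathbf{S}))<1$ is exactly the assertion that this splitting converges.

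First I would make the splitting explicit via the decomposition $\underline{\mathbf H}_{\mathrm{eff}}^{H}\underline{\mathbf H}_{\mathrm{eff}}=\mathbf{D}+\mathbf{L}+\mathbf{L}^{H}$ into diagonal, strictly lower, and strictly upper triangular parts, so that $\mathbf{R}=(\gamma^{-1}+d)\mathbf{I}+\mathbf{L}+\mathbf{L}^{H}$ and $\mathbf{S}=(\gamma^{-1}+d)\mathbf{I}+\mathbf{L}$, whence $\mathbf{N}=\mathbf{S}-\mathbf{R}=-\mathbf{L}^{H}$. This identifies $\mathbf{S}$ as the lower-triangular Gauss--Seidel factor of $\mathbf{R}$, so the scheme is precisely Gauss--Seidel applied to $\mathbf{R}\mathbf{x}=\mathbf{b}$.

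The key computation is then to evaluate $\mathbf{Q}\triangleq\mathbf{M}+\mathbf{M}^{H}-\mathbf{R}$, which upon direct substitution collapses to $\mathbf{Q}=(\gamma^{-1}+d)\mathbf{I}$. This is positive definite because $\gamma^{-1}>0$ and $d=\sum_{i}|\underline{H}_{\mathrm{eff}}[q_k^i,k]|^{2}\ge 0$. By the Householder--John (Ostrowski--Reich) theorem, for a splitting $\mathbf{R}=\mathbf{M}-\mathbf{N}$ of a Hermitian positive definite $\mathbf{R}$, positive definiteness of $\mathbf{M}+\mathbf{M}^{H}-\mathbf{R}$ is sufficient for $\rho(\mathbf{M}^{-1}\mathbf{N})<1$, which finishes the proof. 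Should a self-contained argument be preferred, I would take an eigenpair $\mathbf{M}^{-1}\mathbf{N}\mathbf{v}=\lambda\mathbf{v}$ with $\mathbf{v}\neq\mathbf{0}$, rewrite it as $(1-\lambda)\mathbf{M}\mathbf{v}=\mathbf{R}\mathbf{v}$, note that positive definiteness of $\mathbf{R}$ forces $\lambda\neq 1$ and $\alpha\triangleq\mathbf{v}^{H}\mathbf{R}\mathbf{v}>0$, and substitute $\mathbf{v}^{H}\mathbf{M}\mathbf{v}=\alpha/(1-\lambda)$ to obtain $\mathbf{v}^{H}\mathbf{Q}\mathbf{v}=\alpha(1-|\lambda|^{2})/|1-\lambda|^{2}$; positivity of the left-hand side then forces $|\lambda|<1$, and maximizing over eigenvalues yields the spectral-radius bound.

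The main obstacle is bookkeeping rather than a genuine analytic difficulty: one must carefully match the update \eqref{eq:x_n_GS} to the standard splitting form and track the sign conventions so that $\mathbf{N}=\mathbf{S}-\mathbf{R}=-\mathbf{L}^{H}$ and the iteration matrix reads $-\mathbf{S}^{-1}\mathbf{L}^{H}$. Once the splitting is correctly identified, the crucial positivity of $\mathbf{Q}=(\gamma^{-1}+d)\mathbf{I}$ is immediate and convergence follows directly.
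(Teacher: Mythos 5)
Your proof is correct, and at bottom it rests on the same mechanism as the paper's---an eigenpair argument combined with positive definiteness of $\mathbf{R}$---but the packaging is genuinely different. The paper works directly with the eigenvalue ratio: for an eigenpair $(\lambda,\mathbf{v})$ it writes $\lambda = \mathbf{v}^H(\mathbf{S}-\mathbf{R})\mathbf{v}\,/\,\mathbf{v}^H\mathbf{S}\mathbf{v} = -\mathbf{v}^H\mathbf{L}^H\mathbf{v}\,/\,(\beta(\gamma^{-1}+d)+\mathbf{v}^H\mathbf{L}\mathbf{v})$, splits $\mathbf{v}^H\mathbf{L}\mathbf{v} = a+\imath b$ into real and imaginary parts, uses positive definiteness of $\mathbf{R}$ to obtain $2a > -\beta(d+\gamma^{-1})$, and then compares the moduli of numerator and denominator. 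You instead recognize the iteration as the splitting $\mathbf{R}=\mathbf{M}-\mathbf{N}$ with $\mathbf{M}=\mathbf{S}$, $\mathbf{N}=-\mathbf{L}^H$, compute the Householder--John quantity $\mathbf{M}+\mathbf{M}^H-\mathbf{R}=(\gamma^{-1}+d)\mathbf{I}\succ 0$, and invoke (or reprove) the classical Ostrowski--Reich/Householder--John criterion. Your route is shorter, avoids the real/imaginary bookkeeping entirely (the identity $\mathbf{v}^H(\mathbf{M}+\mathbf{M}^H-\mathbf{R})\mathbf{v} = \alpha(1-|\lambda|^2)/|1-\lambda|^2$ does all the work), and makes the structural reason for convergence transparent: any splitting with positive definite $\mathbf{M}+\mathbf{M}^H-\mathbf{R}$ converges, so the conclusion would survive, e.g., adding a relaxation parameter or changing the diagonal loading. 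The paper's route, in exchange, is fully self-contained---no appeal to a named theorem---and its explicit inequality exhibits how the margin $\beta(\gamma^{-1}+d)$ controls $|\lambda|$. Note finally that your fallback self-contained argument is essentially the paper's computation reorganized around the quadratic form of $\mathbf{M}+\mathbf{M}^H-\mathbf{R}$, so the two proofs coincide at the level of the underlying algebra.
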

\begin{proof}
See Appendix \ref{sec:appen_convproof}.
\end{proof}
\subsubsection{Relation to the Gauss-Seidel method}\label{subsec:rel_GS}

LMMSE equalization is equivalent to solving the system of linear equations $(\underline{\mathbf H}_{\mathrm{eff}}^{H}\underline{\mathbf H}_{\mathrm{eff}}+\gamma^{-1}\mathbf{I})\mathbf{x} =\underline{\mathbf H}_{\mathrm{eff}}^{H}\mathbf{y}$. One way to solve it is using the properties of the Gauss-Seidel iterative method for solving linear equations \cite{bjorck1996numerical}. According to this method, decomposing $\underline{\mathbf H}_{\mathrm{eff}}^{H}\underline{\mathbf H}_{\mathrm{eff}}+\gamma^{-1}\mathbf{I}$ additively in its diagonal part $(d+\gamma^{-1}\mathbf{I})$, its strict lower triangular part $\mathbf{L}$ as well as its strict upper triangular part $\mathbf{L}^H$, gives $\mathbf{x}^{(n)}$ as
\begin{equation}
    \hat{\mathbf{x}}^{(n)} =-\mathbf{((\gamma^{-1} + d)\mathbf{I} + \mathbf{L})}^{-1}\mathbf{L}^H\mathbf{x}^{(n-1)} + \mathbf{((\gamma^{-1} + d)\mathbf{I} + \mathbf{L})}^{-1}\underline{\mathbf H}_{\mathrm{eff}}^{H}\mathbf{y}.\label{eq:GS_equivalent} 
\end{equation}
This means that the weighted MRC-based DFE converges to the LMMSE estimate. This is confirmed in the simulation results section. 
\vspace{-4mm}
\section{Embedded Channel Estimation}\label{section_channelEst}
In order to perform detection, the channel matrix $\mathbf{H}_{\rm{eff}}$ should be known at the receiver side. To enable that, we propose a channel estimation scheme based on the transmission of an embedded pilot symbol $x_{\rm pilot}$ in each AFDM frame surrounded by $Q$ null guard samples on each side of $x_{\rm{pilot}}$ (where $Q$ is defined in \eqref{eq:Q_val}) and $N-1-2Q$ data symbols $x_0^{\rm{data}},…,x_{N-2-Q}^{\rm{data}}$. The guard samples separate the data symbols from the pilot symbol so that the channel estimation can be done at the receiver without any interference from the data symbols. Equivalently, data detection using the estimated channel is performed without interference from the pilot symbol.

We place $x_{\rm{pilot}}$ as the first symbol in the frame as shown in Fig.~\ref{fig:FrameStrucure}:
\begin{figure}
\centering
\begin{minipage}{0.4\textwidth}
  \centering
  \includegraphics[width= \textwidth]{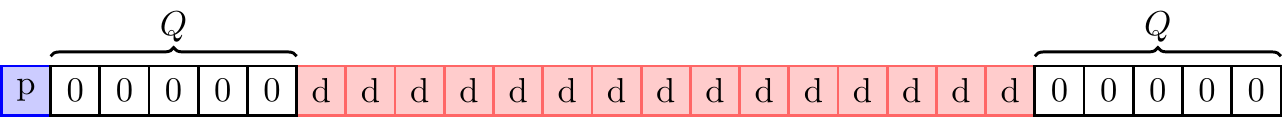}
  \caption{Symbol arrangement at the transmitter}
  \label{fig:FrameStrucure}
\end{minipage}
\hspace{0.01\textwidth}
\begin{minipage}{0.55\textwidth}
\centering
 \includegraphics[width= \textwidth]{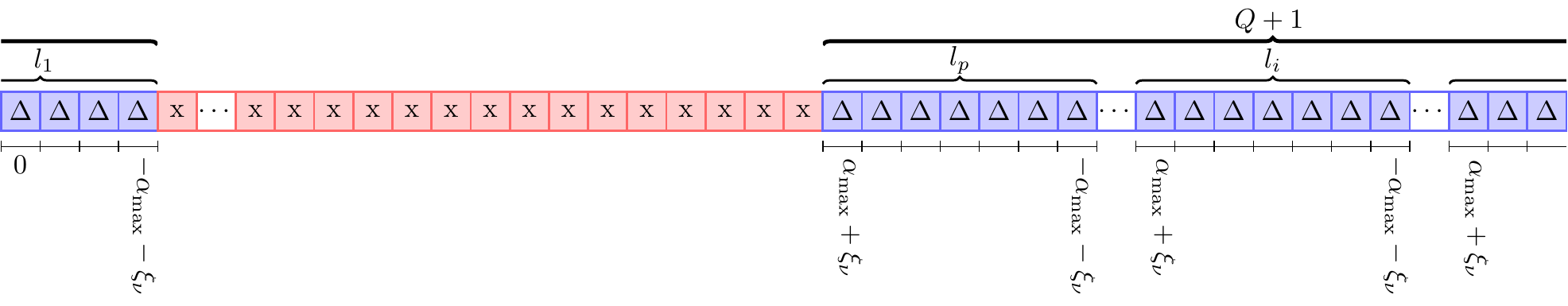}
    \caption{Received frame at the receiver}
    \label{fig:FrameStrucure_rec}
\end{minipage}
\end{figure}
\begin{equation}
    x[p] = \begin{cases}
    x_{\rm{pilot}}, \quad &p = 0\\
    0, \quad &1 \leq p \leq Q, N_Q+1 \leq p \leq N -1\\
    x_{p-Q-1}^{\rm{data}}, \quad &Q+1 \leq p \leq N_Q,
    \end{cases}
    \label{eq:x_with_pilot_sym}
\end{equation}
where $N_Q \triangleq N - Q - 1$. Considering the channel model defined in \eqref{eq:channel_model}, three parameters of each path, delay, Doppler shift, and complex gain, i.e., $3P$ unknown parameters $\boldsymbol{\theta} = [h_0, ..., h_{P-1}, l_0, ..., l_{P-1}, \nu_0, ..., \nu_{P-1}]$ should be estimated. As shown in Fig.~\ref{fig:FrameStrucure_rec}, the part of the received signal that is related to the pilot symbol is considered for the channel estimation. These symbols are expressed by the following equation
\vspace{-2mm}
\begin{equation}
    \underline{{\mathbf y}}_E = \underline{\mathbf H}_{\mathrm{eff}, E} \underline{\mathbf x}_E+ \widetilde{\underline{\mathbf w}}_E,
    \vspace{-1mm}
\end{equation}
where $E$ stands for "Estimation" phase, $\underline{\mathbf x}_E$, $\underline{\mathbf y}_E$ and $\underline{\mathbf H}_{\mathrm{eff}, E}$ are the parts of ${\mathbf x}$, ${\mathbf y}$ and $\mathbf{H}_{\mathrm{eff}}$ related to the channel estimation, respectively. They can be expressed by the matrix $\mathbf{T}_{t,E} = [\mathbf{I}_N]_{{\rm{ind}}_{t,E}
, :}$ and $\mathbf{T}_{r,E} = [\mathbf{I}_N]_{{\rm{ind}}_{r,E}, :}$ where ${\rm{ind}}_{t,E} = [0:Q \quad N_Q+1:N-1]$ and ${\rm{ind}}_{r,E} = [0:\alpha_{\max}+\xi_\nu \quad N_Q+\alpha_{\max}+\xi_\nu+1:N-1]$ as $\underline{\mathbf x}_E = \mathbf{T}_{t, E}\mathbf{x}$, $\underline{\mathbf y}_E = \mathbf{T}_{r, E}\mathbf{y}$ and  $\underline{\mathbf H}_{\mathrm{eff}, E} = \mathbf{T}_{r, E}{\mathbf H}_{\mathrm{eff}} \mathbf{T}_{t, E}^H$. Considering the ML detector, the log-likelihood function to be minimized is given by 
\begin{equation}
    l(\underline{{\mathbf y}}_E|\boldsymbol{\theta},\underline{\mathbf x}_E ) = \|\underline{{\mathbf y}}_E -  \underline{\mathbf H}_{\mathrm{eff}, E} \underline{\mathbf x}_E\|^2.
\end{equation}
Considering \eqref{eq:diff_paths}, it can be written as 
\vspace{-2mm}
\begin{equation}
    l(\underline{{\mathbf y}}_E|\boldsymbol{\theta},\underline{\mathbf x}_E ) = \|\underline{{\mathbf y}}_E -  \sum_{i = 0}^{P-1}h_i\underline{\mathbf H}_{i, E} \underline{\mathbf x}_E\|^2.\label{eq:log_liklihood_func}
    \vspace{-1mm}
\end{equation}
where $\underline{\mathbf H}_{i, E} = \mathbf{T}_{r, E}{\mathbf H}_{i} \mathbf{T}_{t, E}^H$. Since $\underline{\mathbf x}_E$ has only one non-zero element at the first entry, $\underline{\mathbf H}_{i, E} \underline{\mathbf x}_E = x_{\rm{pilot}}\underline{\mathbf{h}}_{i, E, 1}$ where $\underline{\mathbf{h}}_{i, E, 1}$ is the first column of $\underline{\mathbf H}_{i, E}$ and is thus dependent on $l_i$ and $\nu_i$ as can be seen from \eqref{eq:Hi_p_q_integer} and \eqref{eq:Hi_p_q_fractional}, i.e., $\underline{\mathbf{h}}_{i, E, 1}= \underline{\mathbf{h}}_{i, E, 1}(l_i,\nu_i)$ or equivalently $\underline{\mathbf{h}}_{i, E, 1}= \underline{\mathbf{h}}_{i, E, 1}(l_i,\alpha_i,a_i)$. Thus, the ML estimator is given by 
\begin{equation} 
\hat{\boldsymbol{\theta}} = {\rm{arg}} \min _{\boldsymbol{\theta} \in \mathbb{C}^P\times\mathbb{R}^P\times \mathbb{R}^P}\|\underline{{\mathbf y}}_E -  x_{\rm{pilot}}\sum_{i = 0}^{P-1} h_i\underline{\mathbf{h}}_{i, E, 1}(l_i,\nu_i) \|^2.\label{eq:ChEst_ML}
\end{equation}
As brute force search is infeasible in a $3P$-dimensional continuous domain, we propose a low complexity solution to \eqref{eq:ChEst_ML}. For given $\{l_i, \nu_i\}$, the log-likelihood function in \eqref{eq:log_liklihood_func} is quadratic in the complex gain $h_i$. Therefore, solving \eqref{eq:ChEst_ML} with respect to $h_i$ leads to the linear system of equations
\begin{equation}
    \sum_{j = 0}^{P-1}h_j\underline{\mathbf{h}}^H_{i, E, 1}(l_i,\nu_i)\underline{\mathbf{h}}_{j, E, 1}(l_j,\nu_j) = \frac{\underline{\mathbf{h}}_{i, E, 1}^H(l_i,\nu_i)\underline{{\mathbf y}}_E}{x_{\rm{pilot}}}, \quad i = 0, 1, \cdots, P-1.\label{eq:ML_eqs}
\end{equation}
Expanding \eqref{eq:ChEst_ML} and using \eqref{eq:ML_eqs}, the minimization with respect to the $\{l_i, \nu_i\}$ reduces to maximizing the function
\begin{equation}
    l_2(\underline{{\mathbf y}}_E|\boldsymbol{\theta},x_{\rm{pilot}} ) = \sum_{i=0}^{P-1}\frac{|\underline{\mathbf{h}}_{i, E, 1}^H(l_i,\nu_i)\underline{{\mathbf y}}_E|^2}{\underline{\mathbf{h}}_{i, E, 1}^H(l_i,\nu_i)\underline{\mathbf{h}}_{i, E, 1}(l_i,\nu_i)} - \frac{(\sum_{j \neq i }h_j\underline{\mathbf{h}}^H_{i, E, 1}(l_i,\nu_i)\underline{\mathbf{h}}_{j, E, 1}(l_j,\nu_j))\underline{{\mathbf y}}_E^H\underline{\mathbf{h}}_{i, E, 1}x_{\rm{pilot}}}{\underline{\mathbf{h}}_{i, E, 1}^H(l_i,\nu_i)\underline{\mathbf{h}}_{i, E, 1}(l_i,\nu_i)}.\label{eq:delaydop_ML}
\end{equation}
Now considering \eqref{eq:ML_eqs} and \eqref{eq:delaydop_ML}, we show how channel estimation is performed for the integer and fractional Doppler shift cases in the following subsections.
\vspace{-1mm}
\subsection{Integer Doppler Case}
In this case ($\nu_i = \alpha_i$), as it can be seen from \eqref{eq:Hi_p_q_integer}, ${\underline{\mathbf{h}}_{i, E, 1}}$ has only one non-zero element. In addition, for different paths, the location of these non-zero elements are different from each other, i.e,
\vspace{-2mm}
\begin{equation}
    \underline{\mathbf{h}}_{i, E, 1}^H(l_i,\alpha_i)\underline{\mathbf{h}}_{j, E, 1}(l_j,\alpha_j) = \begin{cases}
    1, & i = j\\
    0, & i\neq j
    \end{cases}.\label{eq:path_orthogonality}
\end{equation}
Thus, \eqref{eq:ML_eqs} and \eqref{eq:delaydop_ML} are rewritten as
\begin{align}
        &h_i = \frac{\underline{\mathbf{h}}_{i, E, 1}^H(l_i,\alpha_i)\underline{{\mathbf y}}_E}{x_{\rm{pilot}}}, \quad i = 0, 1, \cdots, P-1,\label{eq:ML_eqs_integer}\\
        & l_2(\underline{{\mathbf y}}_E|\boldsymbol{\theta},x_{\rm{pilot}} ) = \sum_{i = 0}^{P-1}|\underline{\mathbf{h}}_{i, E, 1}^H(l_i,\alpha_i)\underline{{\mathbf y}}_E|^2\label{eq:delaydopML_integer},
        \vspace{-5mm}
\end{align}
respectively. Thus, the delays and Doppler shifts, i.e., $\mathbf{l} \triangleq [l_0, ..., l_{P-1}]$ and $\boldsymbol{\alpha} \triangleq [\alpha_0, ..., \alpha_{P-1}]$ can be estimated as the argument maximizing the r.h.s. of \eqref{eq:delaydopML_integer}. Due to the structure of $\underline{\mathbf{h}}_{i, E, 1}(l_i,\alpha_i)$, maximizing the r.h.s. of  \eqref{eq:delaydopML_integer} is equivalent to finding the indices of the largest entries of $\underline{\mathbf{y}}_E$. After finding the pair of parameters $\{l_i, \alpha_i\}$ for all the paths, the paths complex gains can be obtained using  \eqref{eq:ML_eqs_integer}.

\vspace{-2mm}
\subsection{Fractional Doppler Case}
For the fractional case ($\nu_i = \alpha_i + a_i$), as it is shown in \eqref{eq:Hi_p_q_fractional}, \eqref{eq:path_orthogonality} cannot in theory hold. Therefore, it is impossible to directly maximize \eqref{eq:delaydop_ML} as the complex gains $h_i$ are not known. Moreover, the second term in \eqref{eq:delaydop_ML} depends on all pairs of $\{l_i, \nu_i\}$ for $j \neq i$. However, assuming large enough $\xi_\nu$, the value of $\underline{\mathbf{h}}_{i, E, 1}^H(l_i,\alpha_i,a_i)\underline{\mathbf{h}}_{j, E, 1}(l_j,\alpha_j,a_j)$ is very small when $i\neq j$. Thus, for the fractional case, we exploit this approximation and assume that \eqref{eq:path_orthogonality} holds to maximize \eqref{eq:delaydop_ML}. 
With this assumption, in order to find the $\{l_i, \nu_i\}$ pairs that maximize \eqref{eq:delaydop_ML}, first, we find the delay and integer part of the Doppler shift, i.e, $\mathbf{l}$ and $\boldsymbol{\alpha}$. To this end, we denote all the delays and integer Doppler shifts combinations set by {$\mathcal{L}\triangleq\{(\mathbf{l},\boldsymbol{\alpha}) | 0\leq l[i]\leq l_{\max}, -\alpha_{\max}\leq \alpha[i]\leq\alpha_{\max}\}$} and pick the one that maximizes \eqref{eq:delaydop_ML}. In the next phase, the fractional parts $\mathbf{a} \triangleq [a_0, ..., a_{P-1}]$ are estimated using the obtained delay and integer Doppler shifts as
\begin{equation}
\hat{\mathbf{a}} = \argmax_{\mathbf{a}} \sum_{i=0}^{P-1}\frac{|\underline{\mathbf{h}}_{i, E, 1}^H(\hat{l}_i,\hat{\alpha}_i,a_i)\underline{{\mathbf y}}_E|^2}{\underline{\mathbf{h}}_{i, E, 1}^H(\hat{l}_i,\hat{\alpha}_i,a_i)\underline{\mathbf{h}}_{i, E, 1}(\hat{l}_i,\hat{\alpha}_i,a_i)},    
\end{equation}
using a search on fine discretization of $[-1/2,1/2]^P$. Then, the estimated Doppler shifts become $\hat{\boldsymbol{\nu}} \triangleq [\hat{\nu}_0, ..., \hat{\nu}_{P-1}]= \hat{\boldsymbol{\alpha}} + \hat{\mathbf{a}}$. The complex gains are estimated by solving the linear system \eqref{eq:ML_eqs}. It is worth noting that this algorithm has good performance if the paths have different delays, i.e., for each delay, there is only one Doppler shift.
\section{Simulation Results}\label{sec_sim_result}
In this section, we provide simulation results to assess the performance of AFDM. In all simulations, the complex gains $h_i$ are generated as independent complex Gaussian random variables with zero mean and $1/P$ variance. The carrier frequency is $4$ GHz. BER values are obtained using $10^6$ different channel realizations.

Fig.~\ref{fig:AFDM_diversity} shows the simulated BER performance of AFDM for different channels with $N = 16$ and BPSK using the ML detection. 
We consider three different channels with different numbers of paths, namely a 2-path, a 3-path , and a 4-path channel. The maximum delay spread (in terms of integer taps) is set to be $2$ ($l_{\max} = 1$), $3$ ($l_{\max} = 2$) and $4$ ($l_{\max} = 3$), respectively. The duration between two successive delay taps is approximately $41.6$ $\mu$s. The maximum Doppler shift is considered $\alpha_{\max} = 1$, which corresponds to a maximum speed of $405$ km/h. We observe that for each channel, AFDM achieves the full diversity of the channel. Note that the plots of $s_1(SNR)^{−2}$, $s_2(SNR)^{−3}$ and $s_3(SNR)^{−4}$ are only used to identify the slope of the curves and do not represent an upper bound. 

\begin{figure}
\centering
\begin{subfigure}{0.4\textwidth}
  \centering
  \includegraphics[width= \textwidth]{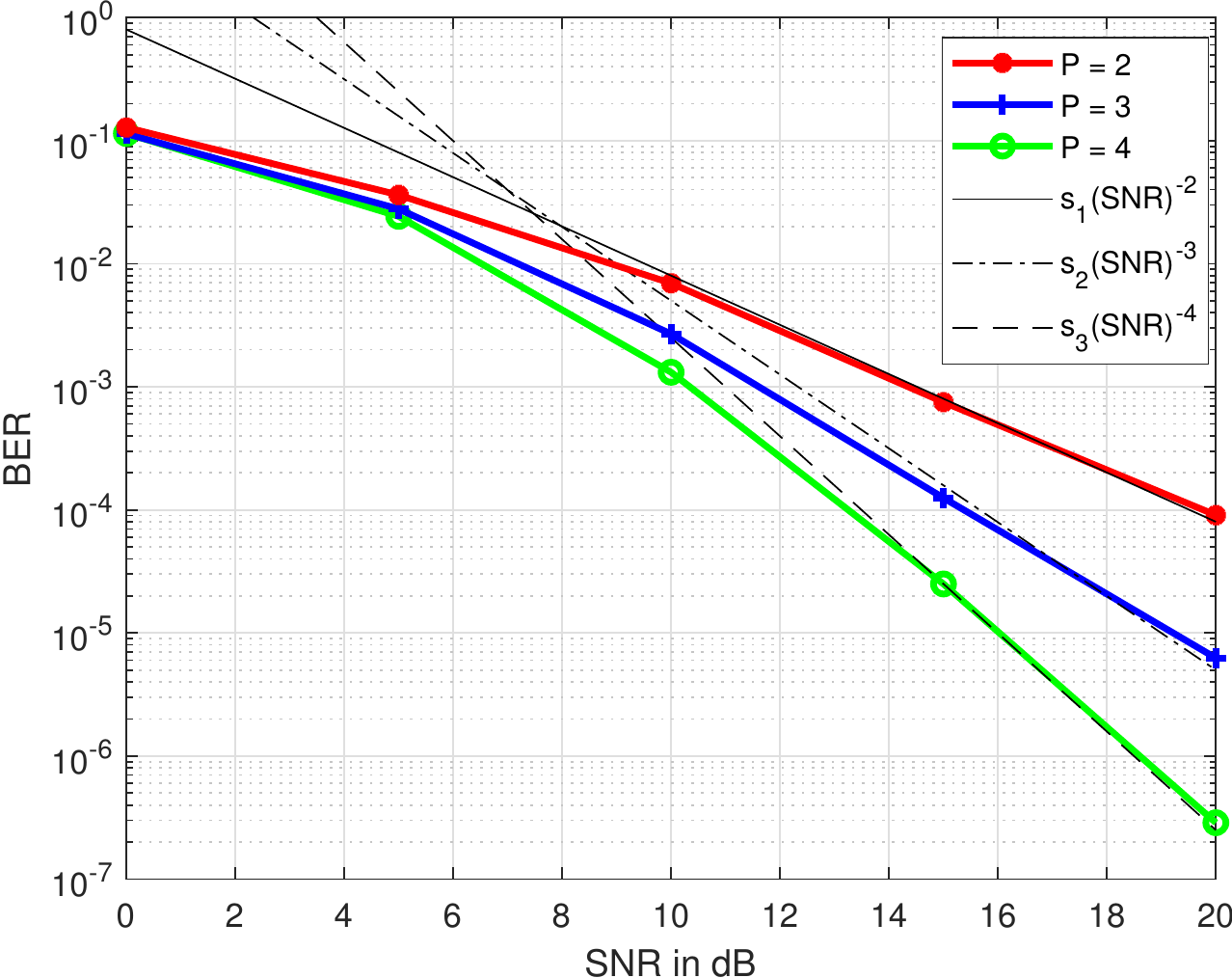}
    \caption{BER performance of AFDM with different number of paths for $N = 16$ .}
    \label{fig:AFDM_diversity}
\end{subfigure}
\hspace{0.1\textwidth}
\begin{subfigure}{0.4\textwidth}
\centering
 \includegraphics[width= \textwidth]{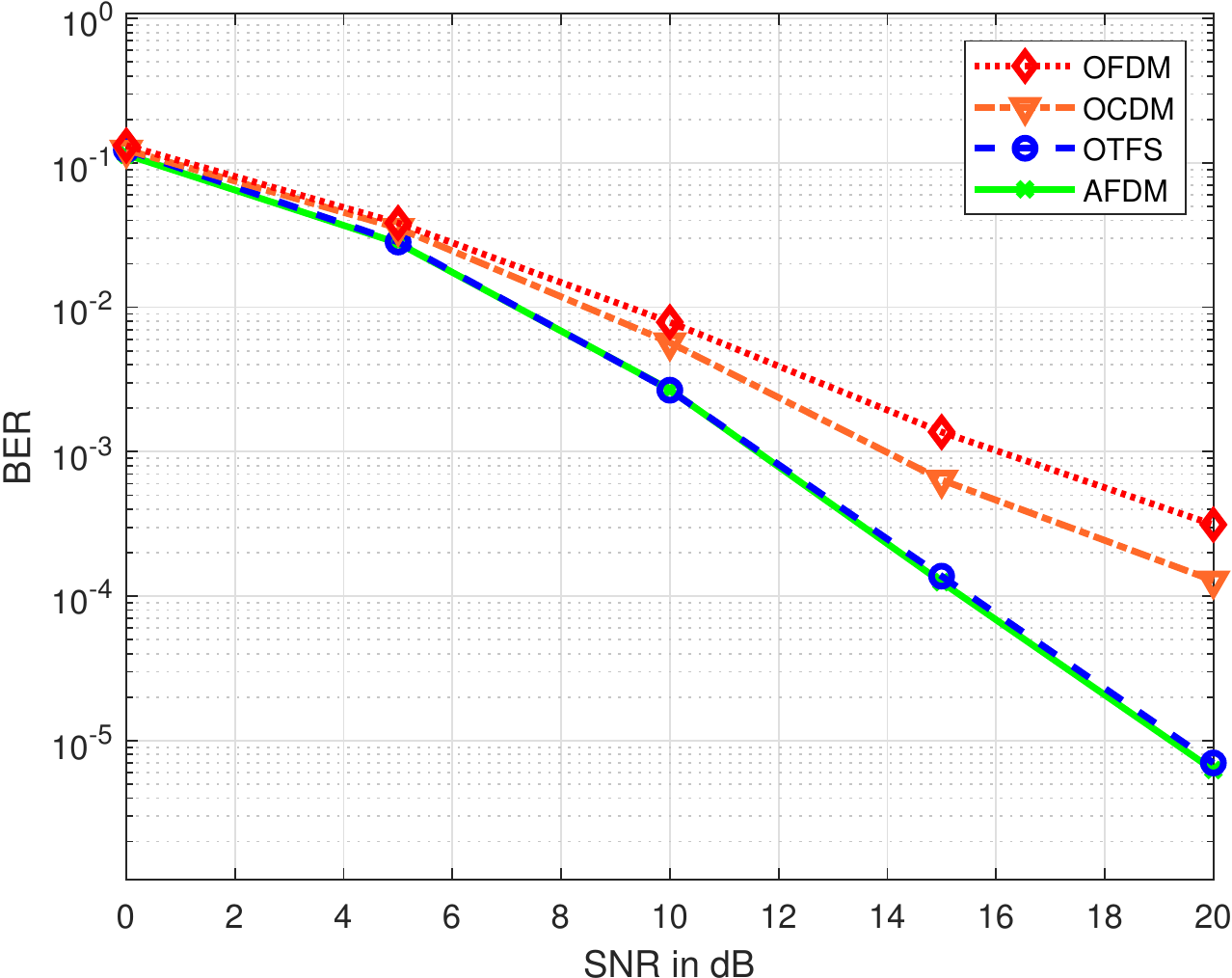}
    \caption{BER performance of OFDM, OCDM, OTFS and AFDM in a three-path channel for $N = 16$,  $N_{\rm{OTFS}} = 4$ and $M_{\rm{OTFS}} =4$.}
    \label{fig:BERDiffSchmemes}
\end{subfigure}
\caption{BER performance using BPSK in LTV channels using ML detection.}
\label{}
\end{figure}

Before proceeding further, we recall that in OTFS the time–frequency signal plane is sampled at intervals $T_{\rm{OTFS}}$ (seconds) and $\Delta f_{\rm{OTFS}}$ (Hz), respectively to obtain a grid as 
\begin{equation}
\Lambda_{\rm{OTFS}} = \{(nT_{\rm{OTFS}},m\Delta f_{\rm{OTFS}}), n = 0,\cdots,N_{\rm{OTFS}} − 1,m = 0,\cdots, M_{\rm{OTFS}} − 1\},    
\end{equation}
and modulated time–frequency samples $X[n,m], n = 0, . . .,N_{\rm{OTFS}} − 1, m = 0, . . .,M_{\rm{OTFS}} − 1$ are transmitted over an OTFS frame with duration $T_{f_{\rm{-OTFS}}} = N_{\rm{OTFS}}T_{\rm{OTFS}}$ and occupy a bandwidth $B_{\rm{OTFS}} = M_{\rm{OTFS}}\Delta f_{\rm{OTFS}}.$
In order to compare the performance of AFDM against OTFS, we assume $N = M_{\rm{OTFS}}N_{\rm{OTFS}}$ to have the same resources in AFDM and OTFS frames.

Fig.~\ref{fig:BERDiffSchmemes} shows the BER performance of AFDM, OFDM, OCDM, and OTFS. For the DAFT-based schemes, we generate the frames with $N = 16$. 
OTFS frame is generated with $N_{\rm{OTFS}} = 4$ and $M_{\rm{OTFS}} = 4$.
The maximum delay spread is set to be $l_{\max} = 2$ and the maximum Doppler shift is $\alpha_{\max} = 1$. The delay shifts are fixed and Jakes Doppler spectrum is considered for each channel realization, i.e, the Doppler shifts are varying and the Doppler shift of each path is generated using $\alpha_i = \alpha_{\max} cos(\theta_i)$, where $\theta_i$ is uniformly distributed over $[−\pi, \pi]$. Expectedly, OFDM has the worse performance as it cannot separate the paths. The performance of OCDM depends on the delay-Doppler profile of the channel. OCDM performs poorly and has the same diversity (one) as OFDM, due to the possible destructive addition of the two overlapping paths.
The reason why OCDM has better performance than OFDM is related to its better path separation capabilities than OFDM. The proposed AFDM achieves full diversity, mainly due to path separation by tuning $c_1$ and setting $c_2$ to be an arbitrary irrational number or a rational number sufficiently smaller than $\frac{1}{2N}$. We also observe that AFDM has the same BER performance as OTFS.

In the previous figures, small $N$ values are assumed along with ML detection to show the diversity order of AFDM. In the following figures, we consider a more practical configuration with QPSK, $N=256$ for the DAFT-based schemes and $N_{\rm{OTFS}} = 16$, $M_{\rm{OTFS}} = 16$ for the OTFS. We consider a 3-path channel. The maximum Doppler shift is $\alpha_{\max} = 2$, which corresponds to a speed of $540$ km/h, and the Doppler shift of each path is generated using Jakes Doppler spectrum.  The maximum delay spread is set to be $l_{\max} = 2$.
Fig.~\ref{fig:MMSEdetection} shows the BER performance of the DAFT-based schemes and OTFS. All results are obtained with LMMSE detection at the receiver. We observe that AFDM outperforms OFDM and OCDM, while having identical performance with OTFS. However, when channel estimation is taken into account, the pilot overhead of OTFS is twice that of AFDM due to the 2D structure of its underlying transform. Indeed, while the AFDM embedded pilot scheme presented in Section \ref{section_channelEst} occupies $2(l_{\max} + 1)(2(\alpha_{\max} + \xi_\nu) + 1)-1$ entries out of the $N$ entries of the AFDM symbol, its OTFS counterpart \cite{raviteja2019embedded} requires $\left(4(\alpha_{\max} + \xi_\nu)+1\right)\left(2l_{\max}+1\right)$ (for the integer Doppler shifts $\xi_\nu = 0 $ ). This difference translates into a significant advantage of AFDM over OTFS in terms of spectral efficiency, as shown in Fig.~\ref{fig:throughput}. The spectral efficiency values were derived from the BER values plotted in Fig.~\ref{fig:MMSEdetection}.

\begin{figure}
\centering
\begin{subfigure}{0.3\textwidth}
\centering
 \includegraphics[scale=.50]{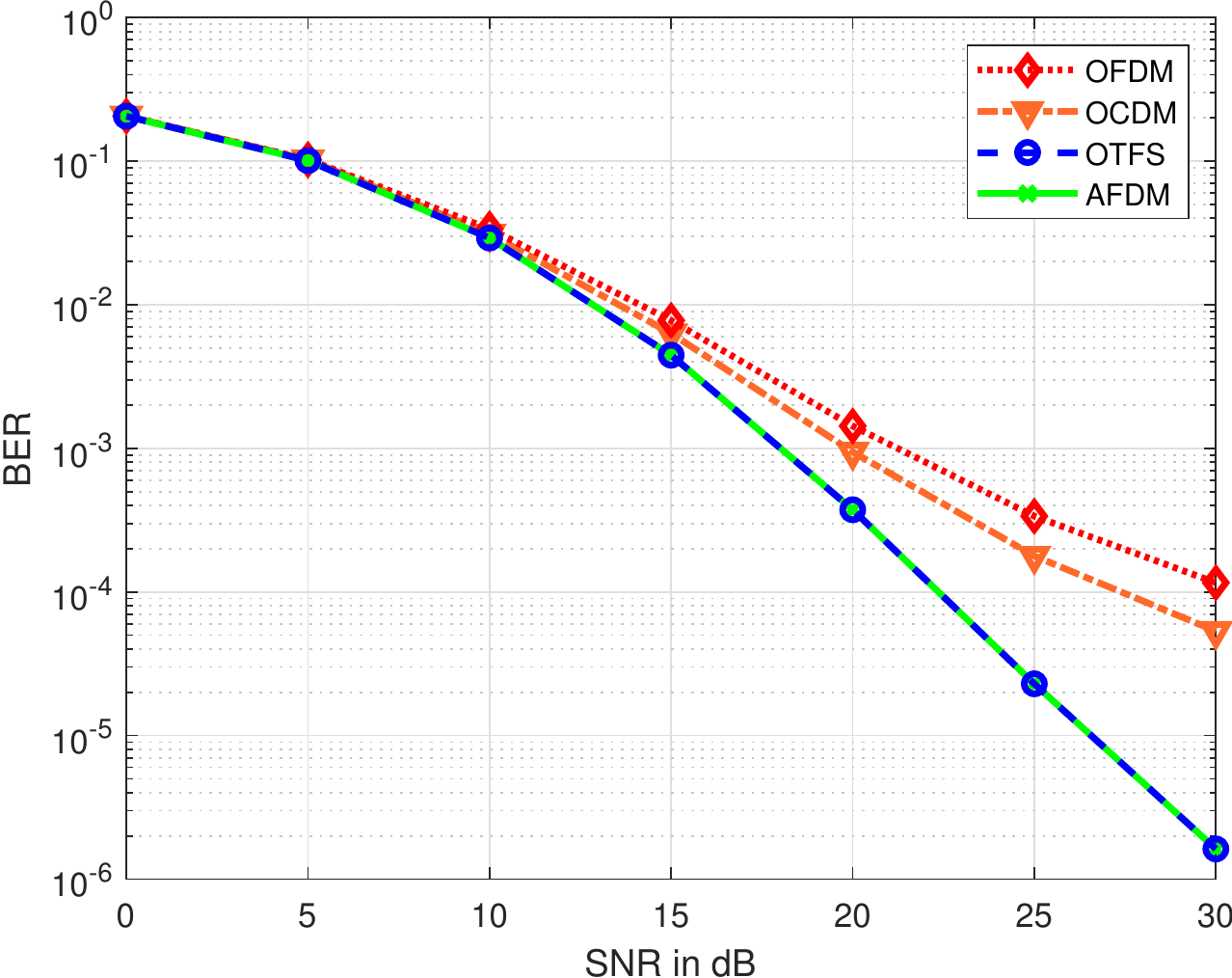}
    \caption{BER }
    \label{fig:MMSEdetection}
\end{subfigure} \hspace{0.2\textwidth}
\begin{subfigure}{0.3\textwidth}
\centering
\includegraphics[scale=.50]{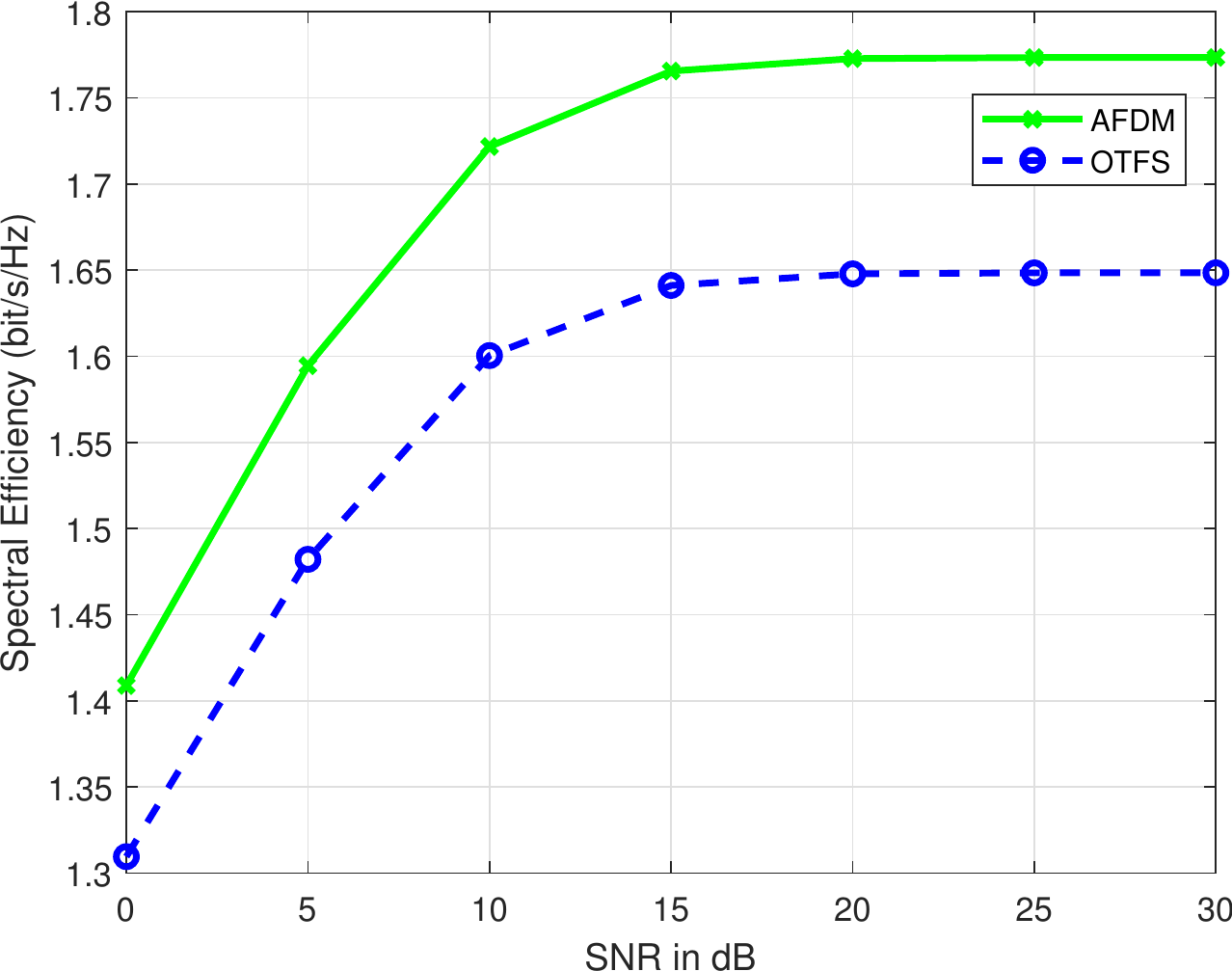}
    \caption{Spectral efficiency}
    \label{fig:throughput}
\end{subfigure}
\caption{BER and spectral efficiency performance of OFDM, OCDM, OTFS and AFDM using MMSE detection.}
\label{fig:AFDM_OTFS_BER_SE}
\vspace{-4mm}
\end{figure}

Fig.~\ref{Fig:detection} compares the performance of AFDM and OFDM in terms of BER using different detectors. In this figure, integer and fractional Doppler shifts are considered. We observe that AFDM outperforms OFDM, owed to achieving full diversity and every information symbol being received through multiple independent non-overlapping paths.
Moreover, it shows that the weighted MRC-based DFE has close performance to exact LMMSE, which validates Theorem \ref{theo:convergence}. In Fig.~\ref{Fig:detection_integer}, all BER curves (i.e., of the two methods and of low-complexity MMSE \cite{bemani2022low} based on banded matrix approximation) coincide because the channel matrices used in the three detection methods are all the same and are banded without approximation. In Fig.~\ref{Fig:detection_frac}, exact LMMSE has slightly better performance than the low-complexity methods due to the use of the banded-matrix approximation in the latter when Doppler frequency shifts are fractional.

\begin{figure}
\centering
\begin{subfigure}{0.3\textwidth}
\centering
 \includegraphics[scale=.50]{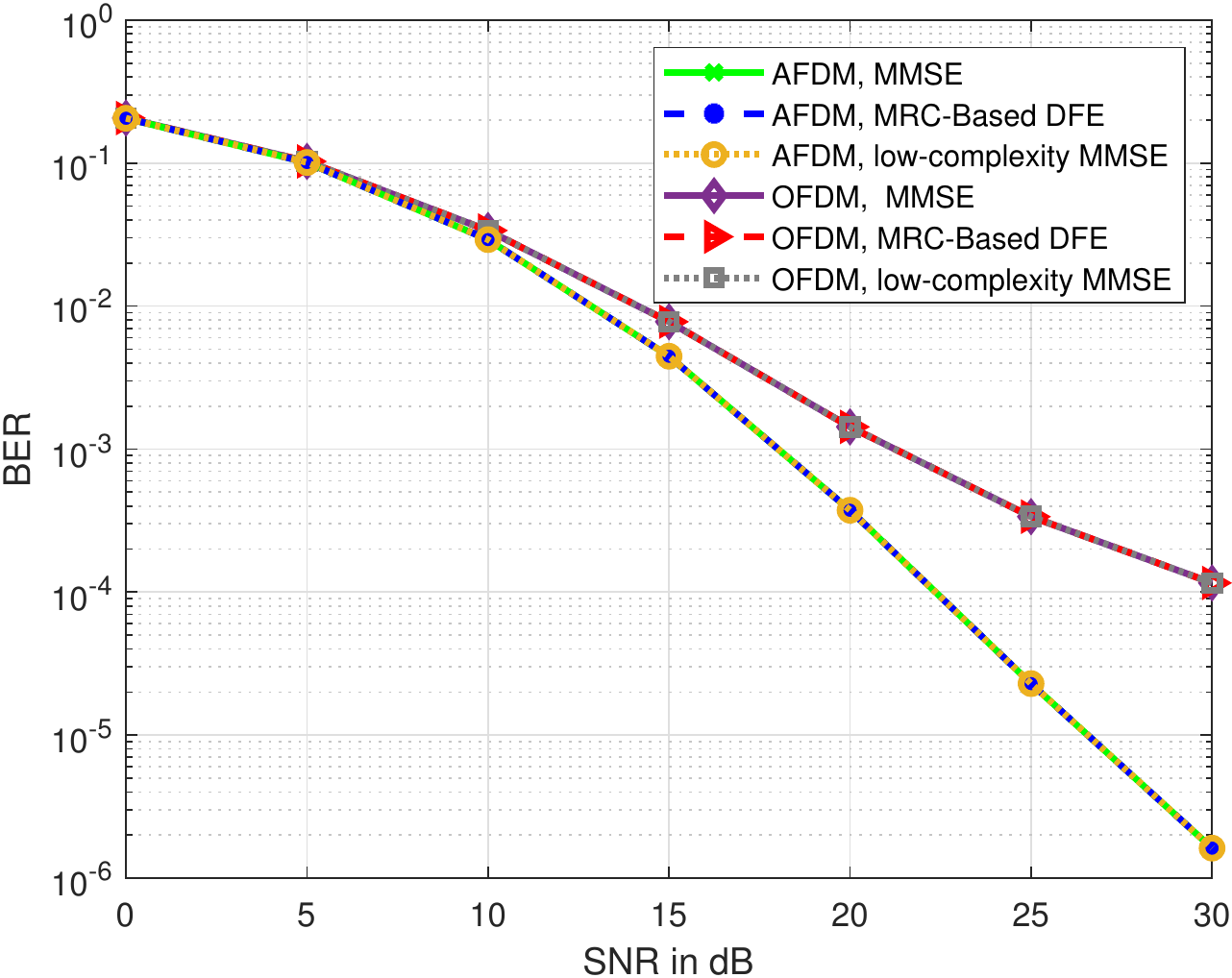}
    \caption{Integer Doppler shifts}
    \label{Fig:detection_integer}
\end{subfigure} \hspace{0.2\textwidth}
\begin{subfigure}{0.3\textwidth}
\centering
\includegraphics[scale=.50]{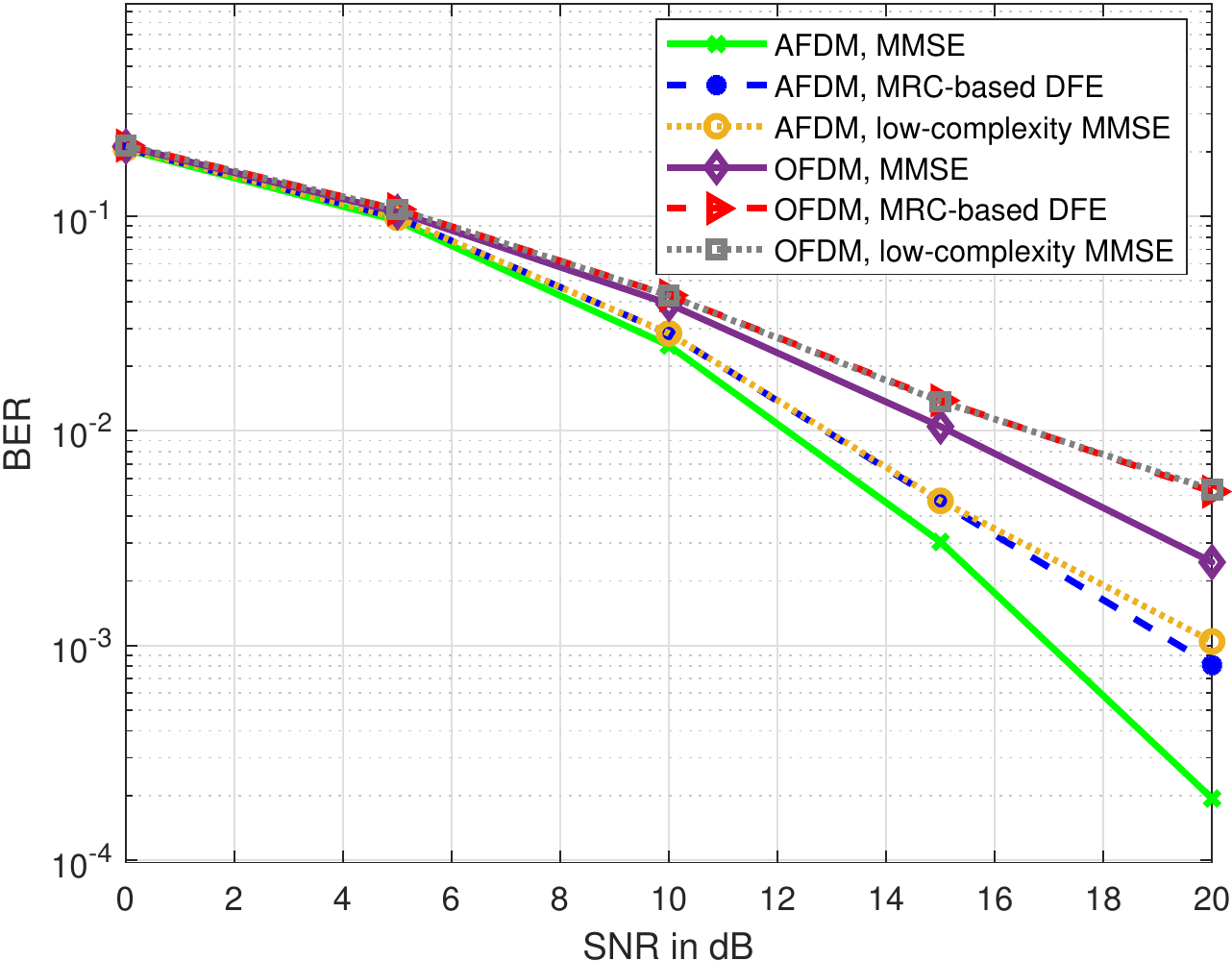}
    \caption{Fractional Doppler shifts}
    \label{Fig:detection_frac}
\end{subfigure}
\caption{BER performance comparison between AFDM and OFDM systems using different detectors for the integer and fractional Doppler shifts.} 
\label{Fig:detection}
\vspace{-5mm}
\end{figure}

We now assess the BER performance of AFDM when detection is performed based on the channel state information given by the proposed channel estimation scheme. The pilot symbol SNR is denoted by ${\rm{SNR}}_{\rm p} = \frac{|x_{\rm{pilot}}|^2}{N_0}$ and the data symbols have the ${\rm{SNR}}_{\rm d} = \frac{\mathbb{E}\{|x^{\rm{data}}|^2\}}{N_0}$. 
\begin{figure}
    \centering
    \includegraphics[scale=.55]{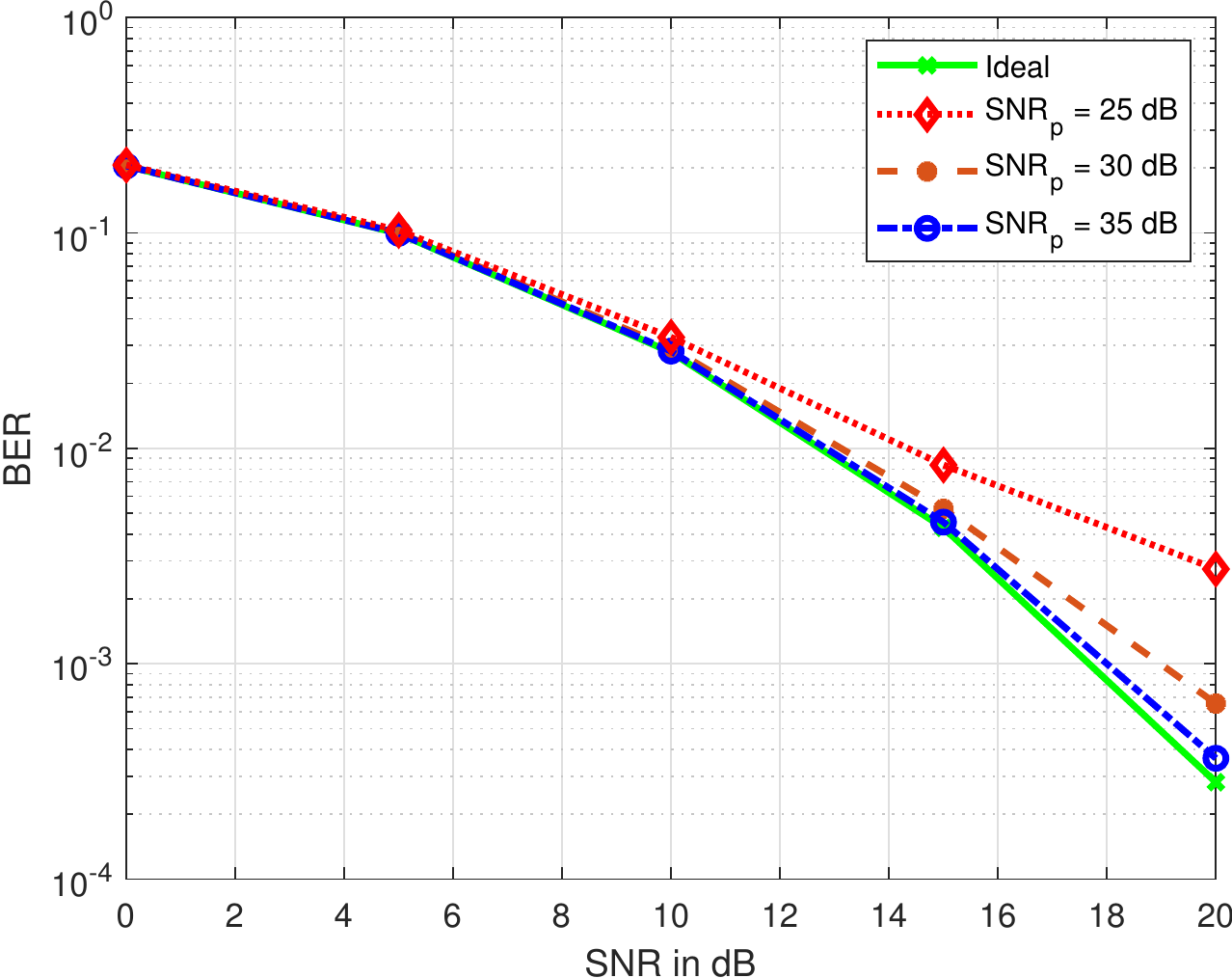}
    \caption{BER versus ${\rm{SNR}}_{\rm d}$ for the integer Doppler case with different ${\rm{SNR}}_{\rm p}$ and ideal channel.}
    \label{fig:channel_est_integer}
    \vspace{-5mm}
\end{figure}
Fig.~\ref{fig:channel_est_integer} shows the BER versus ${\rm{SNR}}_{\rm d}$ for AFDM considering the ideal case of perfect channel knowledge at the receiver as well as the case where the channel is estimated using the proposed algorithm for integer Doppler case with different values of ${\rm{SNR}}_{\rm p}$. As ${\rm{SNR}}_{\rm p}$ increases, the BER decreases and the AFDM performance improves. Moreover, we see that for ${\rm{SNR}}_{\rm p} = 35$ dB, the performance of AFDM with the propsoed channel estimation is very close to the ideal case.

Fig.~\ref{fig:ChEst_frac_snr} shows the BER performance of AFDM for different ${\rm{SNR}}_{\rm p}$ considering the fractional Doppler shift case. Similar to the integer Doppler shift case, increasing the pilot power improves the error performance. As we can see, with ${\rm{SNR}}_{\rm p} = 40$ dB, AFDM with the proposed embedded channel estimation has similar performance with AFDM with perfect channel knowledge at the receiver. Note that the system has more overhead in the fractional Doppler shift case. In addition, larger ${\rm SNR}_{\rm p}$ is needed to achieve the same performance. Note that in practice, it is possible to assume larger values for ${\rm SNR}_{\rm p}$ compared to ${\rm SNR}_{\rm d}$ since the zero guard samples surrounding the pilot symbol allow for the transmit power of the latter to be boosted without violating the average transmit power constraint.
Under ideal receiver-side channel knowledge, it can be seen from Fig.~\ref{fig:ChEst_frac_xi} that increasing $\xi_{\nu}$, improves the performance of AFDM, as less overlapping is occurring between the matrices ${\mathbf H}_{i}$ belonging to different channel paths in the effective channel matrix ${\mathbf H}_{\mathrm{eff}}$. With practical channel estimation, Fig.~\ref{fig:ChEst_frac_xi} shows that increasing $\xi_{\nu}$ also improves the channel estimation quality, since inter-path interference when channel estimation algorithm for fractional Doppler shift case is performed decreases for the same reason, i.e, less overlapping between the matrices ${\mathbf H}_{i}$.

\begin{figure}
\centering
\begin{subfigure}{0.3\textwidth}
\centering
 \includegraphics[scale=.50]{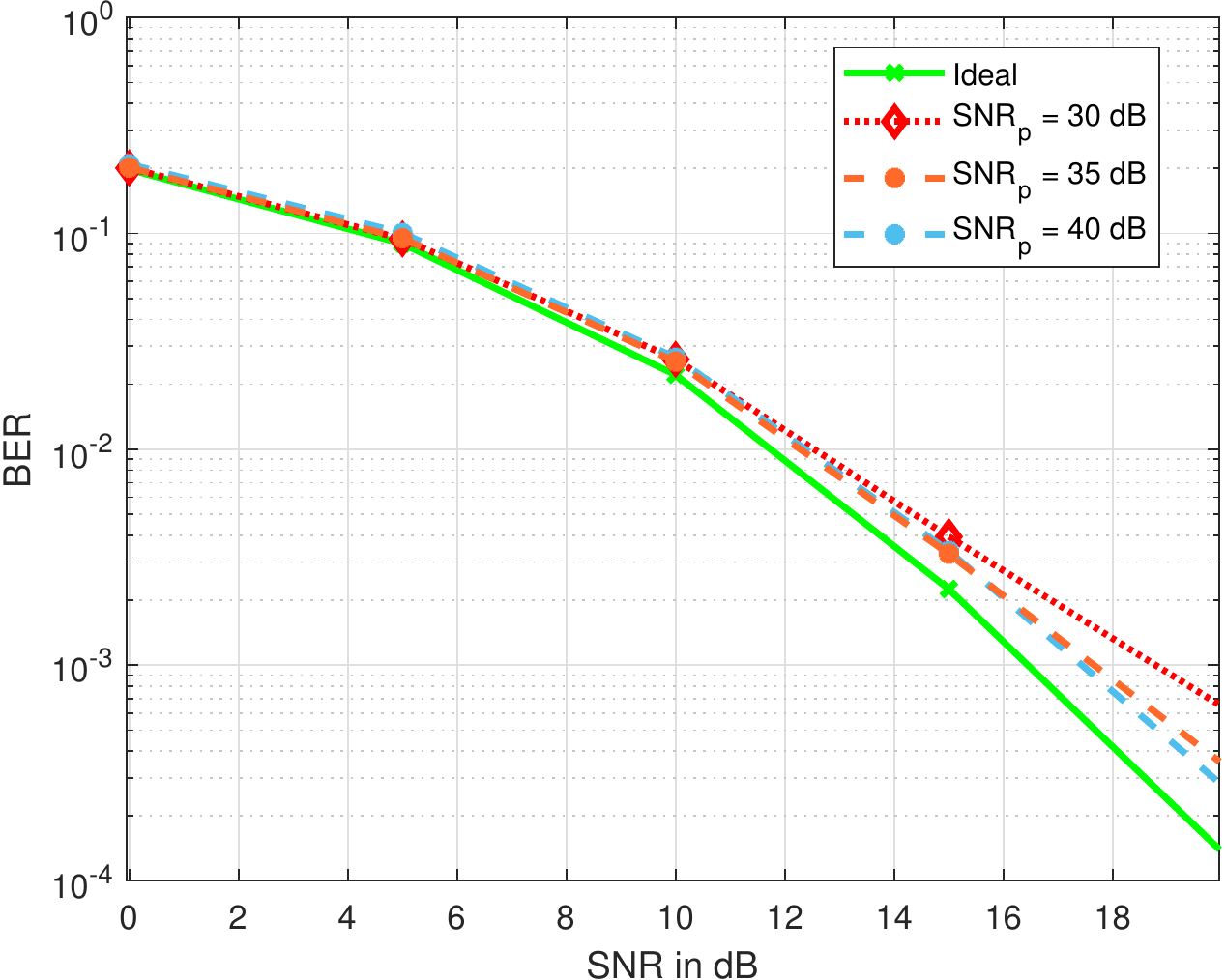}
    \caption{BER versus ${\rm{SNR}}_{\rm d}$ considering different ${\rm{SNR}}_{\rm p}$ and ideal receiver-side CSI}
    \label{fig:ChEst_frac_snr}
\end{subfigure} \hspace{0.2\textwidth}
\begin{subfigure}{0.3\textwidth}
\centering
\includegraphics[scale=.50]{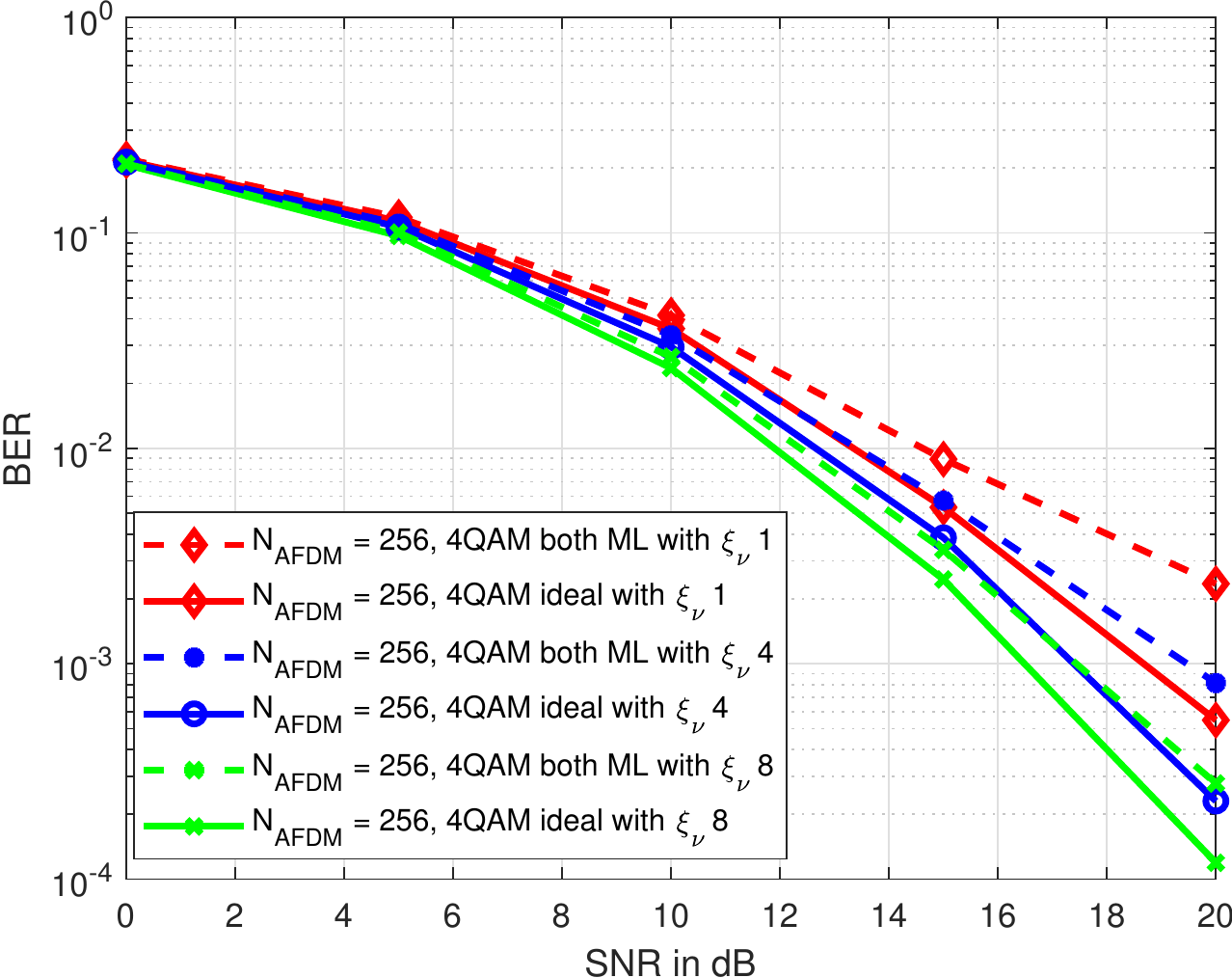}
    \caption{BER versus $SNR_d$ with different $\xi_\nu$ considering ideal and estimated channel with $ {\rm{SNR}}_{\rm p} = 40$ dB}
    \label{fig:ChEst_frac_xi}
\end{subfigure}
\caption{The effect of ${\rm{SNR}}_{\rm p}$ and $\xi_\nu$ on the BER performance}
\label{fig:ChEst}
\vspace{-2mm}
\end{figure}
    \vspace{-2mm}
\section{Conclusion}\label{sec_conc}
We proposed a new waveform, coined AFDM, which employs multiple discrete-time orthogonal chirp signals generated using the discrete affine Fourier transform. The unique features and effects of DAFT were revealed by deriving the input-output relation. Using the input-output relation, the AFDM parameters can be tuned such that the DAFT domain channel impulse response constitutes a full representation of its delay-Doppler profile. Then, we showed analytically that AFDM can achieve full diversity in doubly dispersive channel by properly tuning its pulse parameters. Inserting zero-padding in the DAFT domain, we proposed a low complexity detector and channel estimation algorithms for AFDM. Simulation results showed that AFDM outperforms OFDM and other DAFT-based multicarrier schemes, while having advantages over OTFS in terms of pilot and user multiplexing overhead. The main takeaway of this paper is that AFDM is a promising new waveform for high mobility communications in future wireless systems.

\appendices

\vspace{-4mm}
\section{Proof of theorem \ref{theo:main}}\label{sec_appen_proof}
We give the proof of Theorem \ref{theo:main} only in the case of integer Doppler shifts. The proof holds also for the fractional Doppler shift with some modifications.

First, we show that when \eqref{eq:opt_c1_int} and \eqref{eq:cond_1} hold, there exist values of $c_2$ such that the rank of $\mathbf{\Phi}(\boldsymbol{\delta})$ is equal to $P$, i.e., such that the $P$ columns of $\mathbf{\Phi}(\boldsymbol{\delta})$ are linearly independent. Therefore, considering the $\mathbf\Phi(\boldsymbol{\delta})$ for a $P$-path channel
{\small
\begin{equation}
 \begin{aligned}
    &{\boldsymbol{\Phi}}({\boldsymbol{\delta}}) = [\mathbf{H}_1\boldsymbol{\delta} \mid \ldots \mid \mathbf{H}_P\boldsymbol{\delta}]=\\
    &{\left[\begin{array}{cccc}{H}_{eff}[0, \mathrm{loc}_1]\mathbf{\delta}[{\mathrm{loc}_1}]& \cdots & {H}_{eff}[0, \mathrm{loc}_P]\mathbf{\delta}[{\mathrm{loc}_P}] \\ {H}_{eff}[1, (\mathrm{loc}_1+1)_N]\mathbf{\delta}[{(\mathrm{loc}_1+1)_N}] & \cdots & {H}_{eff}[1, (\mathrm{loc}_P+1)_N]\mathbf{\delta}_{(\mathrm{loc}_P+1)_N} \\ \vdots & \ddots & \vdots \\ {H}_{eff}[N-1, (\mathrm{loc}_1+N-1)_N]\mathbf{\delta}[{(\mathrm{loc}_1+N-1)_N}] & \cdots & {H}_{eff}[N-1, (\mathrm{loc}_P+N-1)_N]\mathbf{\delta}[{(\mathrm{loc}_P+N-1)_N}] \end{array}\right]},
\end{aligned}
\vspace{-4mm}
\end{equation}
}
we should show that
\begin{equation}\label{eq:ind_eq}
\begin{aligned}
    \beta_1\mathbf{H}_1\boldsymbol{\delta} + \beta_2\mathbf{H}_2\boldsymbol{\delta} + \cdots + \beta_P\mathbf{H}_P\boldsymbol{\delta} = \textbf{0}
    \rightarrow &\beta_1 = \beta_2 = \cdots = \beta_P = 0,
    \end{aligned}
\end{equation}
which is proved by contradiction. Assume that there is at least one $\beta_i \neq 0$ and $\beta_1\mathbf{H}_1\boldsymbol{\delta} + \beta_2\mathbf{H}_2\boldsymbol{\delta} + \cdots + \beta_P\mathbf{H}_P\boldsymbol{\delta} =  \textbf{0}$. Without loss of generality (wlog), we assume $\beta_1 \neq 0$. Dividing both sides of the vector equality in \eqref{eq:ind_eq} by $\beta_1$ and considering the first entry of the resulting vector, we have
\begin{equation}\label{eq:dela_q1}
    \delta[{\mathrm{loc}_1}] = -\frac{{H}[0, \mathrm{loc}_2)]}{{H}[0, \mathrm{loc}_1]}\frac{\beta_2}{\beta_1}\delta[{\mathrm{loc}_2}]- \cdots -\frac{{H}[0, \mathrm{loc}_P]}{{H}[0, \mathrm{loc}_1]}\frac{\beta_P}{\beta_1}\delta[{\mathrm{loc}_P]}.
\end{equation}
In addition, by taking to account the $\mathbf{H}_{eff}$ expression, we have
\begin{equation}\label{eq:H1_H2)}
    \frac{{H}[0, \mathrm{loc}_i]}{{H}[0, \mathrm{loc}_j]} = e^{\imath2\pi c_2(\mathrm{loc}_i^2-\mathrm{loc}_j^2)}e^{\imath\frac{2\pi}{N} (Nc_1(l_i^2-l_j^2) - \mathrm{loc}_il_i + \mathrm{loc}_jl_j)}.
\end{equation}
Now we can rewrite \eqref{eq:dela_q1} using \eqref{eq:H1_H2)}
\begin{equation}\label{eq:deltaq1_2}
\begin{aligned}
    \delta[{\mathrm{loc}_1}] = e^{-\imath2\pi c_2\mathrm{loc}_1^2}e^{\imath\frac{2\pi}{N}(Nc_1(-l_1^2 + \mathrm{loc}_1l_1))}
    \left[\sum_{i = 2}^{P}e^{\imath2\pi c_2\mathrm{loc}_i^2}e^{\imath\frac{2\pi}{N}(Nc_1l_i^2-\mathrm{loc}_il_i)}\beta_i'\delta[{q_i}]
    \right],
    \end{aligned}
\end{equation}
where $\beta_i' = \frac{-\beta_i}{\beta_1}$. Note that $\boldsymbol{\delta} \in \mathbb{Z}[j]^{N\times 1}$, therefore since $\delta[{\rm{loc}_1}] \in \mathbb{Z}[j]$, then the r.h.s. of \eqref{eq:deltaq1_2} should be in $\mathbb{Z}[j]$ to have the equality. On the other hand, choosing an irrational number for $c_2$, then $e^{\imath2\pi c_2q_i^2}$ is an irrational number and since \eqref{eq:deltaq1_2} should hold for different values of $\mathbf{\delta}$, the effect of $c_2$ should be removed from this equation. This can be done by choosing 
\begin{equation}\label{eq:alpha'}
    \beta_i' = e^{\imath2\pi c_2q_1^2}e^{-\imath2\pi c_2q_i^2}\mu_i, i = 2, \cdots, P
\end{equation}
where $\mu_i$s do not contain $c_2$ in their phases. Now in order to have \eqref{eq:deltaq1_2} hold, at least another $\beta_i$ should be non-zero. Again wlog, assume the non-zero one is $\beta_2$. Dividing both sides of the vector equality in \eqref{eq:ind_eq} with $\beta_2$ and considering the second entry of the resulting vector we get
{
\begin{equation}
\begin{aligned}
    \delta[{(\mathrm{loc}_2+1)_N}] = & e^{-\imath2\pi c_2(\mathrm{loc}_2+1)_N^2}e^{\imath\frac{2\pi}{N}(Nc_1(-l_2^2 + (\mathrm{loc}_2+1)_Nl_2))}\\
    &\times \left[\sum_{i = 1, i\neq2}^{P}e^{\imath2\pi c_2(\mathrm{loc}_i+1)_N^2}e^{\imath\frac{2\pi}{N}(Nc_1l_i^2-(\mathrm{loc}_i+1)_Nl_i)}\beta_i''\delta[{(\mathrm{loc}_i+1)_N}]
    \right],
\end{aligned}
\end{equation}
}
where $\beta_i'' = \frac{-\beta_i}{\beta_2}$. With the same explanation, $\beta_i'' $s are
\begin{equation}\label{eq:alpha''}
    \beta_i'' = e^{\imath2\pi c_2(q_2+1)_N^2}e^{-\imath2\pi c_2(q_i+1)_N^2}\mu'_i, i = 1, \cdots, P, i\neq 2.
\end{equation}
Putting \eqref{eq:alpha'} and \eqref{eq:alpha''} together shows
\begin{equation}
    \beta_2' = \frac{1}{\beta_1''},
\end{equation}
which leads to 
\begin{equation}\label{eq:check}
    e^{\imath2\pi c_2\left(\mathrm{loc}_2^2 - (\mathrm{loc}_2+1)_N^2 +\mathrm{loc}_1^2 - (\mathrm{loc}_1+1)_N^2 \right) }\mu_2\mu_1' = 1.
\end{equation}
On one hand, since $\mathrm{loc}_1\neq \mathrm{loc}_2$, $\mathrm{loc}_2^2 - (\mathrm{loc}_2+1)_N^2 +\mathrm{loc}_1^2 - (\mathrm{loc}_1+1)_N^2$ cannot be zero and on the other hand, we know that $\mu_i$ and $\mu_i'$ do not contain $c_2$ in their phases. Therefore the $c_2$ effect in the phase cannot be removed and the l.h.s. of \eqref{eq:check} is an irrational number while the right one is integer. Thus, our initial assumption does not hold and $\beta_1 = \cdots = \beta_P = 0$, which means that the $P$ columns of $\mathbf{\Phi}(\boldsymbol{\delta})$ are linearly independent, i.e., the rank of $\mathbf{\Phi}(\boldsymbol{\delta})$ is $P$. 
\vspace{-4mm}
\section{Proof of theorem \ref{theo:convergence}}\label{sec:appen_convproof}
In order to prove the convergence, we should show that $|\lambda(-\mathbf{S}^{-1}(\mathbf{R} - \mathbf{S}))| < 1$, where $\lambda$ denotes any eigenvalue of $-\mathbf{S}^{-1}(\mathbf{R} - \mathbf{S})$. For the corresponding eigenvector $\mathbf{v}$, where $\mathbf{v}^H\mathbf{v} = \beta>0$, we can write
\begin{equation}
    -\mathbf{S}^{-1}(\mathbf{R} - \mathbf{S})\mathbf{v} = \lambda(-\mathbf{S}^{-1}(\mathbf{R} - \mathbf{S}))\mathbf{v}.\label{eq:eig_vec1}
\end{equation}
After multiplying both sides of \eqref{eq:eig_vec1} by $\mathbf{v}^H\mathbf{S}$, it writes as
\begin{equation}
    \lambda(-\mathbf{S}^{-1}(\mathbf{R} - \mathbf{S})) = \frac{\mathbf{v}^H(-(\mathbf{R} - \mathbf{S}))\mathbf{v}}{\mathbf{v}^H\mathbf{S}\mathbf{v}}.\label{eq:eig_vec2}
\end{equation}
Considering $\mathbf{S} = (\gamma^{-1} + d)\mathbf{I} + \mathbf{L}$ and $\mathbf{R} =\underline{\mathbf H}_{\mathrm{eff}}^{H}\underline{\mathbf H}_{\mathrm{eff}}+\gamma^{-1}\mathbf{I} = \mathbf{L} + \mathbf{L}^H+(d + \gamma^{-1})\mathbf{I} $, \eqref{eq:eig_vec2} becomes
\begin{equation}
    \lambda(-\mathbf{S}^{-1}(\mathbf{R} - \mathbf{S})) =\frac{-\mathbf{v}^H\mathbf{L}^{H}\mathbf{v}}{(\gamma^{-1} + d)\mathbf{v}^H\mathbf{v} + \mathbf{v}^H\mathbf{L}\mathbf{v}}.\label{eq:eig_vec3}
\end{equation}
Since $\mathbf{R}$ is positive definite Hermitian matrix, any non-zero vector including $\mathbf{v}$ satisfies
\begin{align}
\mathbf{v}^H(\underline{\mathbf H}_{\mathrm{eff}}^H\underline{\mathbf H}_{\mathrm{eff}} + \gamma^{-1}\mathbf{I})\mathbf{v} = \mathbf{v}^H(\mathbf{L} + \mathbf{L}^H +(d + \gamma^{-1})\mathbf{I})\mathbf{v} = \beta(d + \gamma^{-1})+2\mathcal{R}(\mathbf{v}^H\mathbf{L}\mathbf{v}) >0, 
\label{eq:a_inequality}
\end{align}
where \eqref{eq:a_inequality} can be written as 
\begin{equation}
    a = \mathcal{R}(\mathbf{v}^H\mathbf{L}\mathbf{v}) = \mathcal{R}(\mathbf{v}^H\mathbf{L}^{H}\mathbf{v}) > \frac{-\beta(d + \gamma^{-1})}{2},\label{eq:ineq_2}
\end{equation}
and the imaginary part is equal to 
\begin{equation}
    b = \mathcal{I}(\mathbf{v}^H\mathbf{L}\mathbf{v}) = -\mathcal{I}(\mathbf{v}^H\mathbf{L}^{H}\mathbf{v}).
\end{equation}
Therefore, \eqref{eq:eig_vec2} can be rewritten as
\begin{equation}
    |\lambda(-\mathbf{S}^{-1}(\mathbf{R} - \mathbf{S}))| =\frac{|a-\imath b|}{|\beta(\gamma^{-1} + d)+a -\imath b |}.\label{eq:eig_vec3}
\end{equation}
From \eqref{eq:ineq_2} and \eqref{eq:eig_vec3}, it can be seen that $|\lambda(-\mathbf{S}^{-1}(\mathbf{R} - \mathbf{S}))| < 1$.

\ifCLASSOPTIONcaptionsoff
  \newpage
\fi

\bibliographystyle{IEEEtran}
\bibliography{IEEEabrv,Citations}

\end{document}